\newcommand{\bb}{\mathbf{b}}
\newcommand{\bh}{\mathbf{h}}
\newcommand{\bn}{\mathbf{n}}
\newcommand{\bx}{\mathbf{x}}
\newcommand{\br}{\mathbf{r}}
\newcommand{\bs}{\mathbf{s}}
\newcommand{\by}{\mathbf{y}}
\newcommand{\bz}{\mathbf{z}}
\newcommand{\bH}{\mathbf{H}}
\newtheorem{remark}{Remark~}
\newtheorem{assumption}{Assumption~}
\newtheorem{definition}{Definition~}
\newtheorem{theorem}{Theorem~}
\newtheorem{lemma}{Lemma~}
\newtheorem{proposition}{Proposition~}
\newtheorem{corollary}{Corollary~}
\def\x{{\mathbf x}}
\title{An AMP-Based Asymptotic Analysis for nonlinear one-bit precoding}%
\name{Zheyu Wu$^{\S}$,  Junjie Ma$^{\dag}$, Ya-Feng Liu$^{\star}$,  and Bruno Clerckx$^{\S}$
}
\address{$^{\S}$Department of Electrical and Electronic Engineering, Imperial College London, London, U.K. \\[2pt]
    $^{\dag}$LSEC, ICMSEC, AMSS, Chinese Academy of Sciences, Beijing, China\\[2pt]
$^{\star}$School of Mathematical Sciences, Beijing University of Posts and Telecommunications, Beijing, China\\[2pt]
 Email:  \{zheyu.wu, b.clerckx\}@imperial.ac.uk, majunjie@lsec.cc.ac.cn, yafengliu@bupt.edu.cn}
\begin{document}

\ninept
\maketitle

\begin{abstract}
This paper focuses on the asymptotic analysis of a class of nonlinear one-bit precoding schemes  under Rayleigh fading channels. The considered scheme employs a convex-relaxation-then-quantization (CRQ) approach to the well-known minimum mean square error (MMSE) model, which includes the classical one-bit precoder SQUID as a special case. 
 To analyze its asymptotic behavior, we develop a novel analytical framework based on approximate message passing (AMP). We show that, the statistical properties of the considered scheme can be asymptotically characterized by a scalar ``signal plus Gaussian noise'' model. Based on this, we further derive a closed-form expression for the symbol error probability (SEP) in the large-system limit, which quantitatively characterizes the impact of both system and model parameters on SEP performance. Simulation results validate our analysis and also demonstrate that performance gains over SQUID can be achieved by appropriately tuning the parameters involved in the considered model.



%
%
\end{abstract}
\begin{keywords}
Approximate message passing, asymptotic analysis, nonlinear one-bit precoding.\end{keywords}
\vspace{-0.1cm}
\section{Introduction}
\vspace{-0.1cm}
To meet the growing demand for high capacity, reliability, and connectivity in wireless communication systems, antenna arrays have continued to scale up in size, giving rise to massive multi-input multi-output (MIMO) systems \cite{massivemimo1}.  However, conventional radio frequency (RF) chains equipped with high-resolution digital-to-analog converters (DACs) incur prohibitive hardware cost and power consumption, which severely limits the scalability of antenna arrays and poses a significant implementation challenge. To address this issue, one-bit DACs have emerged as an attractive alternative for realizing energy-efficient and cost-effective massive MIMO systems.  In recent years, one-bit precoding has attracted significant research interests \cite{SQUID,MFrate,ZF,MMSE2,WSR,duality_onebit,SEPlinear,C3PO2,ADMM,CIfirst,sep2,CImodel,GEMM,PBB,onebit_CI,qce_isac}. 

The most straightforward one-bit precoding approach is linear-quantized precoding, where the one-bit transmit signal is obtained by directly quantizing the output of a conventional linear precoder. To capture the effect of one-bit quantization,  new analytical frameworks based on Bussgang decomposition \cite{Bussgang} and random matrix theory \cite{SEPlinear} have been developed. Building on these frameworks, extensive research has been devoted to both the design and analysis of linear-quantized precoding \cite{MFrate,ZF,MMSE2,WSR,duality_onebit,SEPlinear}. However, the coarse one-bit quantization inevitably results in significant performance degradation, as demonstrated in existing literature. This has motivated the investigation of nonlinear one-bit precoding.

Nonlinear (symbol-level) one-bit precoding generates the one-bit transmit signal based on both the channel state information and the instantaneous data symbols at each time slot, typically by solving a dedicated optimization problem with explicit one-bit constraints. It generally achieves significantly better performance than linear-quantized precoding, albeit at the cost of higher computational complexity. Recent research on nonlinear (one-bit) precoding has mainly focused on developing efficient and effective optimization algorithms \cite{SQUID,C3PO2,ADMM,CIfirst,sep2,CImodel,GEMM,PBB,onebit_CI,qce_isac}, while its performance analysis remains largely underexplored. 

{\color{black}In this paper, we provide a rigorous asymptotic analysis of a class of nonlinear one-bit precoding schemes. The considered scheme is based on a convex-relaxation-then-quantization (CRQ) approach applied to the minimum mean square error (MMSE) model widely adopted in the literature, which includes the classical one-bit precoder SQUID \cite{SQUID} as a special case. We develop a novel analytical framework based on approximate message passing (AMP) \cite{donoho2009message,AMP,feng2022unifying,LASSO,elasticnet,ma2017orthogonal,rangan2019vector,fan2022approximate,dudeja2023universality,liu2024unifying}. Building on this framework, we provide a precise characterization of the asymptotic behavior of the considered scheme.} 

\vspace{-0.2cm}
\section{System model and problem formulation}\label{sec:2}
\vspace{-0.2cm}

We consider a downlink multiuser multi-input single-output (MU-MISO) system consisting of an $N$-antenna base station (BS) and $K$ single-antenna users. The BS is equipped with one-bit DACs and simultaneously transmits data symbols to all users, while the users are assumed to have infinite-resolution anolog-to-digital converters (ADCs). The received signal vector at the users can then be expressed as\footnote{Following related literature \cite{ma2025,PAPR}, we consider a real-valued system in this paper.} $$\by=\bH\x_T+\bn,$$ 
where $\bH\in\mathbb{R}^{K\times N}$ is the channel matrix between the BS and the users, $\x_T\in\{-1,1\}^N$ is the transmit signal vector at the BS constrained by one-bit DACs, and   $\bn\sim\mathcal{N}(0,\sigma^2 \mathbf{I})$ is the additive white Gaussian noise (AWGN). 

This paper focuses on the nonlinear one-bit precoding strategy, where the transmit signal $\x_T$ is designed based on both the channel state information $\bH$ and the instantaneous data symbol vector, denoted by $\bs\in\mathbb{R}^{K}$. Following the practice in one-bit precoding literature, we assume that the users rescale the received signal by a factor $\xi\in\mathbb{R}$ to remove the effect of channel gain, yielding the symbol estimate $\tilde{\bs}=\xi\by$. Taking the mean square error (MSE) between estimated and true data symbols, i.e., $$\frac{1}{N}\mathbb{E}_{\bn}[\|\tilde{\bs}-\bs\|^2]=\frac{1}{N}\|\bs-\xi\bH\x_T\|^2+\frac{\sigma^2K}{N}\xi^2,$$ as the performance metric, the MMSE-optimal one-bit precoder can then be obtained by solving 
\begin{equation}\label{eq:MSE}
\min_{\xi,\x_T\in\{-1,1\}^N}\frac{1}{N}\|\bs-\xi\bH\x_T\|^2+\frac{\sigma^2 K}{N}\xi^2.
\end{equation}
The above problem is a large-scale mixed-integer programming and is extremely challenging to solve. In this paper, we focus on a convex-relaxation-then-quantization (CRQ) scheme \cite{SQUID,C3PO2,ADMM}, which first solves an appropriate convex relaxation of \eqref{eq:MSE} and then quantizes the solution to satisfy the one-bit constraint. 

Specifically, introducing $\x=\xi\x_T$ and $a=|\xi|$,  problem \eqref{eq:MSE} becomes
\begin{equation}\label{eq:MSE2}
\begin{aligned}
\min_{a,\x}~&\frac{1}{N}\|\bs-\bH\x\|^2+\frac{\sigma^2 K}{N}a^2\\
\text{s.t. }~& |x_i|= a,~i=1,2,\dots, N.
\end{aligned}
\end{equation}
The introduction of $\x=\xi\x_T$ follows \cite{SQUID}, which transforms the objective function into a convex form by eliminating the multiplicative variable. The reformulation in \eqref{eq:MSE2} motivates the following CRQ precoding scheme. 
\begin{definition}[CRQ precoding]\label{def:CRQ}
The transmit signal is $\x_T=\text{sgn}(\hat{\x})$, where $\hat{\x}$ (together with $\hat{a}=\|\hat{\x}\|_\infty$) is the solution to the following convex relaxation of \eqref{eq:MSE2}:
\begin{equation}\label{eq:MSErelax}
\begin{aligned}
\min_{a,\x}~&\frac{1}{N}\|\bs-\bH\x\|^2+\frac{\rho}{N}\|\x\|^2+\lambda a^2\\
\text{s.t. }~& |x_i|\leq a,~i=1,2,\dots, N,
\end{aligned}
\end{equation}
where $\rho>0$ and $\lambda>0$ are regularization parameters.
\end{definition}
\begin{remark}
The model in \eqref{eq:MSErelax} is equivalent to 
$$\min_{\x}~\frac{1}{N}\|\bs-\bH\x\|^2+\frac{\rho}{N}\|\x\|^2+\lambda \|\x\|_\infty^2.$$
Hence, the well-known one-bit precoder SQUID \cite{SQUID} is a special case of the considered CRQ scheme with $\rho=0$ and $\lambda=\frac{\sigma^2K}{N}$.
\end{remark}
The model in \eqref{eq:MSErelax} relaxes the non-convex constraint  in \eqref{eq:MSE2} to a convex boxed constraint and introduces a general regularization parameter for the term $a^2$. In addition, an $\ell_2$ regularization is added to make the objective function strongly convex in $\x$, which is important for the analysis and also provides extra flexibility for the model.
Our goal in this paper is to theoretically characterize the performance of the CRQ precoding scheme given by Definition \ref{def:CRQ}.
\vspace{-0.15cm}
 \section{Main Reults}
\subsection{Preliminaries}\label{Sec:assumptions}
Our asymptotic analysis is based on the following assumptions.
\vspace{-0.1cm}
\begin{assumption}\label{ass}
The following assumptions hold throughout the paper:

(i) 
$N, K\to \infty \text{ with } \lim_{N,K\to\infty}\frac{K}{N}=\delta\in(0,\infty)$;

(ii) The entries of  $\bH$ are independently drawn from $\mathcal{N}(0,\frac{1}{K})$;

(iii) The entries of $\bs$ are independently drawn from binary shift keying (BPSK) constellation, i.e., $s_k\sim\text{Unif}(\{-1,1\})$.
 \end{assumption}
 
 The first two assumptions are standard in asymptotic analysis. The last assumption is imposed for simplicity of discussions; our approach can be easily generalized to high-order modulation. 
 
 We next give two lemmas that are important for presenting our main results; their proofs are omitted due to the space limitation. 
\begin{lemma}\label{lemma:unique}
Given $\rho>0$,  $\delta>0$,  $a\geq 0$, and define 
$\eta_a(x;\gamma)=\mathcal{P}_{[-a,\,a]}(\frac{x}{\gamma+1}),$
where $\mathcal{P}_{[-a,a]}(\cdot)$ denotes the projection onto $[-a,a]$.
There exists a unique solution $(\tau_a^2,\gamma_a)$ to the following equations:
\begin{subequations}\label{condition0}
\begin{align}
\tau^2&=1+{\delta}^{-1}\mathbb{E}\left[\eta_a^2(\tau Z ;\gamma)\right],\label{conditiona}\\
\rho&=\gamma(1-{\delta}^{-1}\eta_a'(\tau Z;\gamma))
\label{condition0b}
\end{align}
\end{subequations} 
where $Z\sim\mathcal{N}(0,1)$ and $\eta_a'(x;\gamma)$ takes the partial derivative with respect to its first entry.  
\end{lemma}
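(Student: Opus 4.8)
The plan is to decouple \eqref{conditiona}--\eqref{condition0b} and reduce the two-dimensional system to a scalar root-finding problem. Two structural features will drive everything. Since $\eta_a(\cdot;\gamma)$ is the proximal map of a strongly convex function, its derivative obeys $0\le\eta_a'(x;\gamma)\le\frac{1}{\gamma+1}<1$; and the clipping gives $\eta_a^2(\tau Z;\gamma)=\min\!\big(\tfrac{\tau^2Z^2}{(\gamma+1)^2},a^2\big)$, which is a bounded, concave, nondecreasing function of $u:=\tau^2$. The case $a=0$ is immediate, since then $\eta_a\equiv0$ forces $(\tau^2,\gamma)=(1,\rho)$, so I take $a>0$ throughout.

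First I would eliminate $\tau$. For a fixed $\gamma>0$, view \eqref{conditiona} as $u=1+\delta^{-1}Q(u;\gamma)$ with $Q(u;\gamma):=\mathbb{E}\big[\min\!\big(\tfrac{uZ^2}{(\gamma+1)^2},a^2\big)\big]$. Then $\chi(u):=1+\delta^{-1}Q(u;\gamma)-u$ is concave with $\chi(0)=1>0$ and $\chi(u)\to-\infty$, so it has a unique zero $u=T(\gamma)>1$, at which $\chi'<0$, i.e. the stability bound $\delta^{-1}\partial_uQ(T(\gamma);\gamma)<1$ holds. Since $Q$ is strictly decreasing in $\gamma$, the map $T(\gamma)$ is continuous and nonincreasing, and $T'(\gamma)=\frac{\delta^{-1}\partial_\gamma Q}{1-\delta^{-1}\partial_uQ}<0$ is well defined.

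Next, substituting $\tau=\sqrt{T(\gamma)}$ into \eqref{condition0b} and using $\mathbb{E}[\eta_a'(\tau Z;\gamma)]=\frac{1}{\gamma+1}\big(2\Phi(s(\gamma))-1\big)$ with $s(\gamma):=\frac{a(\gamma+1)}{\sqrt{T(\gamma)}}$ (here $\Phi,\phi$ denote the standard normal cdf and pdf), the system collapses to the scalar equation $\Gamma(\gamma)=\rho$, where $\Gamma(\gamma):=\gamma-\frac{\gamma}{\delta(\gamma+1)}\big(2\Phi(s(\gamma))-1\big)$. Existence is then the intermediate value theorem: $\Gamma$ is continuous (by continuity of $T$), $\Gamma(\gamma)\to0$ as $\gamma\to0^+$ since $T(0)$ is finite and the bracket is bounded, and $\Gamma(\gamma)\to\infty$ as $\gamma\to\infty$ because the subtracted term never exceeds $\delta^{-1}$.

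The hard part will be uniqueness. Writing $P(\gamma):=\frac{\gamma}{\delta(\gamma+1)}(2\Phi(s(\gamma))-1)$, the map $\Gamma=\gamma-P$ is a difference of two increasing terms and is not monotone in general (it may even dip negative near $0$ when $\delta$ is small), so I cannot simply invoke monotonicity. Instead I would prove that every root is a strict upcrossing, i.e. $\Gamma'(\gamma^\star)=1-P'(\gamma^\star)>0$ whenever $\Gamma(\gamma^\star)=\rho$; two distinct roots would then force an intervening downcrossing and hence a third root with $\Gamma'\le0$, a contradiction, leaving exactly one. Establishing $P'(\gamma^\star)<1$ is the real work: it requires controlling $s'(\gamma)$ through the implicit derivative $T'(\gamma)$ from the first step, using $\phi'(s)=-s\phi(s)$ and the strong-convexity bound $\eta_a'<\frac{1}{\gamma+1}$, and it is precisely the coupling through $T(\gamma)$ together with the stability inequality $\delta^{-1}\partial_uQ<1$ (an effect of the $\ell_2$ regularization $\rho>0$) that keeps $P'$ below $1$. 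A cleaner but more demanding alternative would be to exhibit a strictly convex--concave potential whose stationarity conditions are exactly \eqref{conditiona}--\eqref{condition0b}, from which existence and uniqueness follow simultaneously; the cost there is identifying and verifying the potential.
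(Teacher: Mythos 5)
The paper never actually gives a proof of this lemma (it explicitly omits the proofs of Lemmas \ref{lemma:unique} and \ref{lemma:astar} for space), so your proposal can only be judged on its own terms. The $a=0$ case, the elimination of $\tau$ via the concave fixed-point map (unique root $T(\gamma)>1$, the stability bound $\delta^{-1}\partial_u Q(T(\gamma);\gamma)<1$, $T\in C^1$ with $T'<0$), and the existence argument via the intermediate value theorem for $\Gamma(\gamma)=\gamma-P(\gamma)$ on $\gamma\in(0,\infty)$ are all correct and rigorous. But the uniqueness half---which is the actual content of the lemma---is not proved: you reduce it to the claim that $P'(\gamma^\star)<1$ at every root of $\Gamma(\gamma)=\rho$, and then only assert that this ``is the real work'' and should follow from the coupling through $T(\gamma)$ and the stability inequality. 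That is a plan, not a proof, so as written there is a genuine gap precisely at the step the lemma turns on. (A smaller point: both the lemma and your argument implicitly restrict to $\gamma>0$; for $\gamma\in(-1,0)$ the literal equations \eqref{condition0} can admit extra solutions when $\delta<1$, so the domain restriction deserves an explicit sentence.)

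The good news is that your skeleton can be completed, and the closing computation shows the decisive inequalities are not quite the ones you cite. With $s=a(\gamma+1)/\sqrt{T(\gamma)}$, set $c=\frac{2\Phi(s)-1}{\delta(\gamma+1)^2}$, $w=\frac{2s\phi(s)}{\delta(\gamma+1)^2}$, and $q=c-w$; since $\mathbb{E}[Z^2\mathbf{1}\{|Z|<s\}]=2\Phi(s)-1-2s\phi(s)$, one has $q=\delta^{-1}\partial_uQ(T(\gamma);\gamma)$. Implicit differentiation of the fixed-point equation gives $T'=-\frac{2T}{\gamma+1}\cdot\frac{q}{1-q}$, hence $s'=\frac{a}{\sqrt{T}}\cdot\frac{1}{1-q}$ and
\begin{equation*}
P'(\gamma)=c+\frac{\gamma w}{1-c+w},\qquad\text{so}\qquad P'(\gamma)<1\ \Longleftrightarrow\ w(\gamma-1+c)<(1-c)^2.
\end{equation*}
At a root, $\rho=\gamma\bigl(1-(\gamma+1)c\bigr)>0$ forces $c<\frac{1}{\gamma+1}$, and $q>0$ gives $w<c<\frac{1}{\gamma+1}$. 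If $\gamma-1+c\le0$ the inequality is trivial; otherwise
\begin{equation*}
w(\gamma-1+c)<\frac{1}{\gamma+1}\Bigl(\gamma-1+\frac{1}{\gamma+1}\Bigr)=\frac{\gamma^2}{(\gamma+1)^2}<(1-c)^2,
\end{equation*}
using $1-c>\frac{\gamma}{\gamma+1}$. Hence every root is a strict upcrossing and your crossing argument yields uniqueness. Note what actually drives the bound: the root condition $(\gamma+1)c<1$ (a consequence of $\rho>0$) and the elementary inequality $2s\phi(s)<2\Phi(s)-1$. The stability inequality $q<1$ that you single out enters only through the formula for $s'$ and is by itself insufficient---away from roots, $P'$ can exceed $1$ when $\delta$ is small even though $q<1$ holds everywhere.
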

\begin{lemma}\label{lemma:astar}
Given $\rho>0$, $\lambda>0$, and $\delta>0$, 
define 
${f}(a)=\delta\rho(\tau_a^2-1)+{\delta\rho^2\tau_a^2}{\gamma_a^{-2}}+\lambda a^2,$
where $(\tau_a^2,\gamma_a)$ is the solution to \eqref{condition0}.
Then $f(a)$ is strongly convex and continuously differentiable on $[0,\infty)$. Furthermore, the unique minimizer of $f(a)$ over $[0,\infty)$, denoted by $a^*$, is positive. 
\end{lemma}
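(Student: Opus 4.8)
The plan is to write $f(a)=h(a)+\lambda a^2$ with $h(a):=\delta\rho(\tau_a^2-1)+\delta\rho^2\tau_a^2\gamma_a^{-2}$ and to handle smoothness, convexity, and the sign of $f'(0)$ in turn. For smoothness, I would first note $\eta_a(x;\gamma)=\arg\min_{|u|\le a}\{\tfrac12(u-x)^2+\tfrac\gamma2 u^2\}$, so that $E_2(\tau^2,\gamma,a):=\mathbb{E}[\eta_a^2(\tau Z;\gamma)]$ and $E_1(\tau^2,\gamma,a):=\mathbb{E}[\eta_a'(\tau Z;\gamma)]$ are Gaussian integrals of piecewise-smooth integrands that, after integrating out $Z$, become $C^\infty$ expressions in the normal CDF and density; thus \eqref{condition0} reads $G(\tau^2,\gamma;a)=0$ with $G\in C^\infty$. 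The implicit function theorem then applies: the nonsingularity of $\partial G/\partial(\tau^2,\gamma)$ is exactly the transversality underlying the uniqueness asserted in Lemma \ref{lemma:unique}, and $\gamma_a>0$ (forced by $\rho>0$ in \eqref{condition0b}) keeps $\gamma_a^{-2}$ smooth, so $a\mapsto(\tau_a^2,\gamma_a)$, and with it $f$, is $C^1$ (indeed $C^\infty$) on $[0,\infty)$.

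For convexity I would compare with finite dimensions. For each $N$ the inner value $V_N(a):=\min_{\|\x\|_\infty\le a}\{\tfrac1N\|\bs-\bH\x\|^2+\tfrac\rho N\|\x\|^2\}$ is a partial minimization of a function jointly convex in $(\x,a)$ (the integrand does not depend on $a$, and $\{(\x,a):|x_i|\le a\}$ is convex), hence convex in $a$. The state-evolution analysis identifies $V_N(a)\to h(a)$ almost surely for each fixed $a$, and convexity passes to pointwise limits, so $h$ is convex; therefore $f=h+\lambda a^2$ is $2\lambda$-strongly convex and, being coercive, has a unique minimizer $a^*$ on $[0,\infty)$.

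For $a^*>0$ I would establish $f'(0)<0$. At $a=0$ the operator degenerates ($\eta_0\equiv0$), so \eqref{condition0} gives $\tau_0^2=1$ and $\gamma_0=\rho$; moreover $E_1$ and $E_2$ vanish identically in $(\tau^2,\gamma)$ at $a=0$, so their $(\tau^2,\gamma)$-partials vanish there, while $\partial_aE_2|_{0}=0$ and $\partial_aE_1|_{0}=\sqrt{2/\pi}$ (both evaluated at $(\tau_0^2,\gamma_0)=(1,\rho)$). Differentiating \eqref{condition0} in $a$ then yields $\tfrac{d}{da}\tau_a^2|_0=0$ and $\tfrac{d}{da}\gamma_a|_0=\tfrac{\rho}{\delta}\sqrt{2/\pi}>0$, and substituting into $h'$ gives $f'(0)=h'(0)=-2\sqrt{2/\pi}<0$. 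A strongly convex $C^1$ function with $f'(0)<0$ attains its minimum at an interior point, so $a^*>0$.

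The main obstacle is the convexity step, since everything hinges on the identification $h(a)=\lim_N V_N(a)$: I would need to confirm that this limit is delivered by the state-evolution/CGMT machinery independently of Lemma \ref{lemma:astar}, so that no circularity arises, as it is precisely the convexity proved here that the main theorem later uses to promote pointwise convergence to convergence of the minimizing $a$. A fully self-contained fallback is to verify $h''\ge0$ directly from the differentiated state equations, but that route is computationally heavy and I would resort to it only if the limit identification cannot be cleanly cited.
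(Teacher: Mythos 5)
The paper never gives its own proof of this lemma---it states explicitly that the proofs of Lemmas \ref{lemma:unique} and \ref{lemma:astar} are omitted for space---so there is nothing internal to compare against; your argument can only be judged on its merits. Its architecture is sound and fits the machinery the paper does provide. Your computations at $a=0$ check out: $\eta_0\equiv 0$ forces $(\tau_0^2,\gamma_0)=(1,\rho)$; the $(\tau^2,\gamma)$-partials of $E_1,E_2$ vanish there; $\partial_a E_2|_0=0$ (immediate since $0\le E_2\le a^2$), $\partial_a E_1|_0=\sqrt{2/\pi}$; implicit differentiation then gives $\tfrac{d}{da}\tau_a^2|_0=0$, $\tfrac{d}{da}\gamma_a|_0=(\rho/\delta)\sqrt{2/\pi}$, and hence $f'(0)=-2\sqrt{2/\pi}<0$, which together with convexity yields $a^*>0$. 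Your worry about circularity in the convexity step is also resolved by the paper's own structure: Lemma \ref{prob:converge}(iii), which identifies $\lim_{N\to\infty}f_N(a;\bs,\bH)$ with $\delta\rho(\tau_a^2-1)+\delta\rho^2\tau_a^2\gamma_a^{-2}$ for each \emph{fixed} $a$, is derived from the AMP state evolution and Lemma \ref{lemma:unique} alone; Lemma \ref{lemma:astar} is invoked only later (Lemma \ref{converge:a}) to control the minimizing $a$. So convexity of the limit---via partial minimization of a jointly convex function, then passage to the pointwise a.s.\ limit on a countable dense set and the standard fact that convex functions converging on a dense set converge locally uniformly to a convex limit---is legitimate and non-circular, and $f=h+\lambda a^2$ is then $2\lambda$-strongly convex and coercive (note $h\ge 0$ since $\tau_a^2\ge 1$).

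The genuine gap is in the smoothness step. You justify the implicit function theorem by asserting that nonsingularity of $\partial G/\partial(\tau^2,\gamma)$ ``is exactly the transversality underlying the uniqueness asserted in Lemma \ref{lemma:unique}.'' That is a non sequitur: uniqueness of a root does not imply a nonsingular Jacobian (consider $x\mapsto x^3$ at the origin). At $a=0$ nonsingularity happens to be directly checkable---the Jacobian is $\mathrm{diag}(1,-1)$ because all $(\tau^2,\gamma)$-partials of $E_1,E_2$ vanish there---so your $f'(0)$ computation, and therefore $a^*>0$, is safe. But the lemma asserts continuous differentiability of $f$ on all of $[0,\infty)$, and for that you need nonsingularity of the Jacobian at every $a>0$. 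This must be argued directly, e.g.\ by extracting the contraction or monotonicity structure of the fixed-point map \eqref{condition0} that any proof of Lemma \ref{lemma:unique} would rely on (say, $1-\delta^{-1}\partial_{\tau^2}E_2>0$ along with strict monotonicity of $\gamma\mapsto\gamma(1-\delta^{-1}E_1)$), none of which you have established. As written, your proof delivers strong convexity, existence and uniqueness of the minimizer, and $a^*>0$, but not the claimed $C^1$ regularity on the whole half-line.
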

\vspace{-0.5cm}

\subsection{Main Results}\label{subsec:mainresult}
The following theorem characterizes  the limiting empirical distribution of $(\bH\bx_T, \bs)$, where $\x_T$ is defined in Definition \ref{def:CRQ}.
\begin{theorem}\label{theorem1}
Let  $(\tau_*^2,\gamma_*):=(\tau_{a^*}^2,\gamma_{a^*})$ be the solution to \eqref{condition0} with $a=a^*$, where $a^*$ is defined in Lemma \ref{lemma:astar}. 
Consider the CRQ precoding scheme given by Definition \ref{def:CRQ}. 
 Under Assumption \ref{ass}, the following result holds for any pseudo-Lipschitz  function $\psi:\mathbb{R}^2\rightarrow \mathbb{R}$: \vspace{-0.15cm}
  \begin{equation}\label{mainresult:sgnfunction}
  \lim_{K\to\infty}\frac{1}{K}\sum_{k=1}^K\psi(\bh_k^T {\x}_T,s_k)\overset{a.s.}{=}\mathbb{E}\left[\psi(\bar{\alpha}S+\bar{\beta}^{\frac{1}{2}}Z,S)\right],
  \end{equation}
  where  
  $S\sim\text{\normalfont{Unif}}(\{-1,1\})$, ${Z}\sim\mathcal{N}(0,1)$ is independent of $S$, $\bar{\alpha}=\sqrt{\frac{2}{\pi}}\frac{1}{\delta\tau_*},~\bar{\beta}={\delta}^{-1}\mathbb{E}[(\bar{\alpha}|\hat{X}|-1)^2],$  and $\hat{X}=\eta_{a^*}\left({\tau_*Z};{\gamma_*}\right).$
  \end{theorem}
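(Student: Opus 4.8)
The plan is to identify the minimizer $\hat{\x}$ of \eqref{eq:MSErelax} with the fixed point of an approximate message passing (AMP) iteration, and then push the resulting scalar description through the quantizer and the channel. For a fixed radius $a$, problem \eqref{eq:MSErelax} is a strongly convex regularized least squares in which $\bs$ plays the role of the observation while there is no planted signal in the $\x$-domain (so the limiting noise level equals $\mathbb E[s_k^2]=1$, explaining the ``$1$'' in \eqref{conditiona}); the per-coordinate proximal map of the $\ell_2$-plus-box penalty is exactly $\eta_a(\cdot;\gamma)$. Running the associated AMP, its fixed point obeys $\hat{\x}=\eta_{a}(\bq;\gamma_a)$ and $\hat{\bz}=\tfrac{\gamma_a}{\rho}(\bs-\bH\hat{\x})$, where $\bq:=\hat{\x}+\bH^{\T}\hat{\bz}$ is the effective field and the limiting Onsager coefficient $\delta^{-1}\mathbb E[\eta_a'(\tau_a Z;\gamma_a)]$ reproduces the relation \eqref{condition0b}. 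I would invoke the state-evolution master theorem of the cited AMP frameworks to get $\bq\Rightarrow\tau_a Z$ and $\hat{\x}\Rightarrow\eta_a(\tau_a Z;\gamma_a)$; Lemma \ref{lemma:unique} makes $(\tau_a^2,\gamma_a)$ well defined, and handling the joint minimization over $a$ through the limiting objective of Lemma \ref{lemma:astar} pins down $a=a^*$, hence $(\tau_*^2,\gamma_*)$ and $\hat X=\eta_{a^*}(\tau_* Z;\gamma_*)$.

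The decisive simplification is that $\eta_{a^*}(\cdot;\gamma_*)$ is a sign-preserving projection, so $\x_T=\text{sgn}(\hat{\x})=\text{sgn}(\bq)$: the quantized signal is a (non-smooth) function of the Gaussian field alone. I would also record the measurement-side scalar channel obtained from a leave-one-out analysis: writing $(\hat{\x}^{(k)},\hat{\bz}^{(k)})$ for the solution with the $k$-th measurement removed (independent of $(\bh_k,s_k)$) and solving the resulting self-consistency for $\hat z_k$, the Onsager denominators cancel and give the clean identity $\hat z_k\approx s_k-\bh_k^{\T}\hat{\x}^{(k)}$, i.e. $\hat z_k=s_k+e_k$ with $e_k=-\bh_k^{\T}\hat{\x}^{(k)}$ Gaussian of variance $K^{-1}\|\hat{\x}^{(k)}\|^2\to\delta^{-1}\mathbb E[\hat X^2]=\tau_*^2-1$ and independent of $s_k$.

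Next I would evaluate $\bh_k^{\T}\x_T=\bh_k^{\T}\text{sgn}(\bq)$ by the same cavity expansion $\bq\approx\bq^{(k)}+\hat z_k\bh_k$. Expanding $\text{sgn}$ to first order yields $\bh_k^{\T}\x_T\approx G_k+\bar\alpha\,\hat z_k$, where $G_k:=\bh_k^{\T}\text{sgn}(\bq^{(k)})$ is, conditionally on the independent vectors, $\mathcal N(0,\delta^{-1})$, and the coefficient $\bar\alpha=\lim\tfrac1K\sum_j[\text{sgn}]'(q_j^{(k)})=\delta^{-1}\mathbb E[[\text{sgn}]'(\tau_* Z)]$ is the Stein/divergence coefficient of $\text{sgn}$ against the field, which evaluates to $\delta^{-1}\sqrt{2/\pi}/\tau_*$ and is exactly the coefficient in the statement. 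Substituting $\hat z_k=s_k+e_k$ isolates the signal part $\bar\alpha s_k$. For the noise variance I must keep the covariance between the two contributions, since $G_k$ and $e_k=-\bh_k^{\T}\hat{\x}^{(k)}$ both depend on $\bh_k$: their covariance is $-K^{-1}\langle\text{sgn}(\bq^{(k)}),\hat{\x}^{(k)}\rangle=-K^{-1}\|\hat{\x}^{(k)}\|_1\to-\delta^{-1}\mathbb E|\hat X|$. Adding $\mathrm{Var}(G_k)=\delta^{-1}$ and $\bar\alpha^2\mathrm{Var}(e_k)=\bar\alpha^2\delta^{-1}\mathbb E[\hat X^2]$ gives $\delta^{-1}\mathbb E[(\bar\alpha|\hat X|-1)^2]=\bar\beta$, so $(\bh_k^{\T}\x_T,s_k)\Rightarrow(\bar\alpha S+\bar\beta^{1/2}Z,S)$, from which \eqref{mainresult:sgnfunction} follows for every pseudo-Lipschitz $\psi$.

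I expect the main obstacle to be making the last two steps rigorous. The map $\text{sgn}$ is not pseudo-Lipschitz, so the state-evolution transfer cannot be applied verbatim; I would approximate it by smooth (or Lipschitz) functions and pass to the limit using that $\bq\Rightarrow\tau_* Z$ has a density, so the event $\{q_j=0\}$ is asymptotically negligible and the first-order sign expansion is controlled. Equally delicate are the AMP-to-optimizer correspondence for the jointly convex program in $(a,\x)$ and the justification of the exact Onsager correction for the bilinear observable $\bH\x_T$; the bookkeeping of the $\bh_k$-induced covariance between $G_k$ and $e_k$ (which supplies the $-2\delta^{-1}\bar\alpha\,\mathbb E|\hat X|$ cross term distinguishing $\bar\beta$ from the naive $\delta^{-1}(1+\bar\alpha^2\mathbb E[\hat X^2])$) is the step most prone to error.
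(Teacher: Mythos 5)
Your derivation reaches the correct constants---including the Stein-lemma evaluation $\bar\alpha=\sqrt{2/\pi}/(\delta\tau_*)$ and, crucially, the cross-covariance $-2\bar\alpha\delta^{-1}\mathbb{E}|\hat X|$ between $G_k=\bh_k^T\text{sgn}(\bq^{(k)})$ and $e_k=-\bh_k^T\hat{\x}^{(k)}$ that separates $\bar\beta$ from the naive sum of variances---and your handling of the outer minimization over $a$ (reduce to the fixed-radius problem at $a^*$ via the limiting objective of Lemma \ref{lemma:astar}) matches the paper's Step 1 (Proposition \ref{the1}, via Lemmas \ref{prob:converge} and \ref{converge:a}). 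Where you genuinely diverge is the second step. The paper never performs a leave-one-out expansion of the optimizer: it appends to the converged AMP sequence one extra iteration, the post-processing step \eqref{postprocessing}, whose denoiser is the composition $\text{sgn}\circ\eta_{a^*}$, so that $\bH\,\text{sgn}(\x_{t+1})$ is itself an AMP residual and the rigorous state-evolution toolkit (Lemma \ref{AMP_property}) immediately yields the joint Gaussian law of the pair of residuals $(Z_t,\tilde Z_{t+1})$; your $\text{Cov}(G_k,e_k)$ appears there as $\mathbb{E}[Z_t\tilde Z_{t+1}]\to\delta^{-1}\mathbb{E}|\hat X|$, computed from $\langle\x_t,\tilde{\x}_{t+1}\rangle=\langle\eta_{a^*}(\br_t;\gamma_*),\text{sgn}(\br_{t+1})\rangle$ using the same sign-preserving property you exploit, and then everything is closed by letting $t\to\infty$. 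The trade-off between the two routes is rigor versus transparency. Your cavity argument is self-contained and makes the origin of each variance contribution explicit, but its rigor burden is substantially heavier: one must quantitatively control the leave-one-out perturbation of the exact optimizer, and the ``first-order expansion'' of $\text{sgn}$ is not a Taylor expansion at all---the $O(1)$ contribution $\bar\alpha\hat z_k$ is assembled from the $O(\sqrt K)$ coordinates with $|q_j^{(k)}|\lesssim K^{-1/2}$ whose signs flip by $\pm 2$, so it requires anti-concentration and counting arguments (precisely the step you flag as most error-prone). The paper's construction gets rigor almost for free from existing AMP theory, since a mollified $\text{sgn}\circ\eta_{a^*}$ is just another admissible Lipschitz denoiser; the only extra work is the mollification limit $\epsilon\to 0$ and the exchange of $t\to\infty$ with $N\to\infty$, which is handled by the convergence of AMP to the strongly convex optimizer. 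So: correct, same scalarization philosophy and same three variance ingredients, but a genuinely different---and harder to rigorize---proof device at the decisive step.
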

  \begin{proof}
  See Section \ref{sec:proof}.
  \end{proof}
  Theorem \ref{theorem1} implies that the statistical properties of the considered system model in Section \ref{sec:2}, 
with the CRQ precoding scheme in Definition \ref{def:CRQ}, can be asymptotically characterized  by a scalar ``signal plus Gaussian noise'' model as \vspace{-0.2cm}
\begin{equation}\label{asymptotic}
y=\bar{\alpha} S+\sqrt{\bar{\beta}+\sigma^2}Z,
\end{equation} where $\sigma^2$ is the variance of the AWGN. 
As a direct corollary, we can characterize the symbol error probability (SEP) achieved by CRQ precoding. The SEP is a widely used performance metric in the one-bit precoding literature, defined as 
$\text{SEP}=\frac{1}{K}\sum_{k=1}^K\mathbb{P}\left(\hat{s}_k\neq s_k\right),$
where $s_k$ and $\hat{s}_k$ are the intended data symbol and the detected symbol at user $k$, respectively. In particular, we consider the symbol detector $\hat{s}_k=\text{sgn}\left(\bh_k^T{\x}_T+n_k\right),$~$k=1,2,\dots, K$, for BPSK constellation.   The following result shows that the SEP of the CRQ precoding converges to that of the asymptotic model in \eqref{asymptotic}.
\begin{corollary}\label{SEP}
The following result holds under Assumption \ref{ass}:\vspace{-0.1cm}
$$\lim_{K\to\infty}\frac{1}{K}\sum_{k=1}^K\mathbb{P}\left(\hat{s}_k\neq s_k\right)\overset{a.s.}{=}Q\left(\sqrt{\overline{\text{\normalfont{SNR}}}}\right),$$
where $Q(x)$ is the  tail distribution function of the standard Gaussian distribution, and $\overline{\text{\normalfont SNR}}=\bar{\alpha}^2/(\bar{\beta}+\sigma^2)$ is the signal to noise ratio (SNR) of the asymptotic model in \eqref{asymptotic}.
\end{corollary}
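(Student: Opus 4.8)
The plan is to reduce the claim to a single application of Theorem~\ref{theorem1}. The subtlety is that the per-user error event is described by the indicator $\mathbf{1}\{\mathrm{sgn}(\cdot)\neq s_k\}$, which is discontinuous and therefore not pseudo-Lipschitz; the crucial observation is that integrating over the AWGN $n_k$ smooths this indicator into a pseudo-Lipschitz test function to which the theorem applies. First I would condition on $(\bH,\bs)$, so that $u_k:=\bh_k^T\x_T$ is deterministic and the only remaining randomness in $\hat s_k=\mathrm{sgn}(\bh_k^T\x_T+n_k)$ comes from $n_k\sim\mathcal{N}(0,\sigma^2)$. Averaging out the noise gives
\[
\mathbb{P}(\hat s_k\neq s_k)=\mathbb{E}_{n_k}\!\left[\mathbf{1}\{\mathrm{sgn}(u_k+n_k)\neq s_k\}\right]=:g(u_k,s_k),
\]
and a short computation yields the closed form $g(u,s)=\tfrac{1-s}{2}+s\,Q(u/\sigma)$. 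Hence $\mathrm{SEP}=\frac1K\sum_{k=1}^K g(\bh_k^T\x_T,s_k)$.

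The key step is to verify that $g$ is pseudo-Lipschitz on $\mathbb{R}^2$. Since $Q(\cdot/\sigma)$ is smooth with bounded derivative $-\phi(\cdot/\sigma)/\sigma$, we have $\partial_u g=-s\,\phi(u/\sigma)/\sigma$ with $|\partial_u g|\le C|s|$, while $\partial_s g=Q(u/\sigma)-\tfrac12$ is uniformly bounded. Therefore $\|\nabla g(u,s)\|\le C(1+|s|)\le C(1+\|(u,s)\|)$, and integrating the gradient along segments gives the order-$2$ pseudo-Lipschitz bound $|g(x)-g(y)|\le C(1+\|x\|+\|y\|)\|x-y\|$. This is precisely where the regularizing effect of the noise is indispensable: without the convolution by the Gaussian, $g$ would inherit the jump of the sign function and the argument would break down.

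With $g$ shown to be pseudo-Lipschitz, applying Theorem~\ref{theorem1} with $\psi=g$ gives
\[
\lim_{K\to\infty}\frac1K\sum_{k=1}^K g(\bh_k^T\x_T,s_k)\overset{a.s.}{=}\mathbb{E}\!\left[g(\bar\alpha S+\bar\beta^{1/2}Z,S)\right].
\]
Finally I would evaluate the right-hand side. Inside $g$, the inner noise of variance $\sigma^2$ combines with the independent term $\bar\beta^{1/2}Z$ into a single centered Gaussian of variance $\bar\beta+\sigma^2$; using $S\sim\mathrm{Unif}(\{-1,1\})$ together with the symmetry of the Gaussian, both values $S=\pm1$ contribute the same error probability, so $\mathbb{E}[g(\bar\alpha S+\bar\beta^{1/2}Z,S)]=Q\big(\bar\alpha/\sqrt{\bar\beta+\sigma^2}\big)=Q(\sqrt{\overline{\mathrm{SNR}}})$, which is the desired limit. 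The main obstacle is the pseudo-Lipschitz verification of the noise-averaged error function in the second paragraph; once that is in place, the remaining steps are routine manipulations.
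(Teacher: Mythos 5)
Your proposal is correct and follows essentially the route the paper intends: the corollary is presented as a direct consequence of Theorem~\ref{theorem1}, and your argument---conditioning on $(\bH,\bs)$, averaging over the AWGN to turn the discontinuous error indicator into the smooth function $g(u,s)=\tfrac{1-s}{2}+s\,Q(u/\sigma)$, verifying that $g$ is pseudo-Lipschitz, and then evaluating $\mathbb{E}[g(\bar\alpha S+\bar\beta^{1/2}Z,S)]=Q\bigl(\bar\alpha/\sqrt{\bar\beta+\sigma^2}\bigr)$ by merging the two independent Gaussians and using symmetry in $S$---is exactly the missing detail that makes this application of the theorem legitimate. The pseudo-Lipschitz verification you highlight is indeed the only nontrivial point, and your treatment of it is sound.
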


Corollary \ref{SEP} provides insights for selecting the regularization parameters $(\rho,\lambda)$ in model \eqref{eq:MSErelax}. In principle, they can be optimized to minimize the asymptotic SEP. However, the dependence of the asymptotic SEP on $(\rho,\lambda)$ is implicit and  complicated, resulting in a challenging two-dimensional optimization problem. Analytically deriving the optimal $(\rho,\lambda)$ is left as an interesting future work.

{\color{black} Before concluding this section, we briefly discuss a very recent related work \cite{ma2025}, which provides an asymptotic analysis of a quantized box-constrained one-bit precoding scheme. Our work differs from \cite{ma2025} in both the considered scheme and the analytical framework.
 In \cite{ma2025}, the variable $a$ in \eqref{eq:MSErelax} is treated as a fixed parameter that must be carefully tuned, whereas our scheme jointly optimizes $a$ and $\x$, thus improving the performance and eliminating the need of  manual tuning. 
 In terms of the analytical framework,  \cite{ma2025} is based on a convex Gaussian min–max theorem  \cite{CGMT} specifically tailored to the sign function. In contrast,  our analysis is built upon AMP, as elaborated in the next section. Our proposed framework is more versatile and can be readily generalized to capture broader nonlinear effects beyond one-bit quantization.}
\vspace{-0.2cm}
 
\section{Proof Sketch of Theorem 1}\label{sec:proof}
\vspace{-0.1cm}
  In this section, we outline the key steps involved in the proof of Theorem \ref{theorem1}. 
  The main technique used in our proof is AMP \cite{donoho2009message,AMP}, which defines an iterative process as follows:\vspace{-0.1cm}
\begin{equation}\label{AMP_form}
\begin{aligned}
\bx_{t+1}&=\eta_t(\bx_{t}+\bH^T\bz_t),\\
\bz_t&=\bs-\bH\bx_t+\frac{1}{\delta}\left<\eta_{t-1}'(\bx_{t-1}+\bH^T\bz_{t-1})\right>\bz_{t-1},
\end{aligned}\vspace{-0.1cm}
\end{equation}
where $\eta_t(\cdot)$ is a sequence of Lipschitz denoising functions that act element-wise on their inputs, $\left<\cdot\right>$ takes the element-wise average of its input, and $\bH$ is a random matrix satisfying Assumption \ref{ass} (ii). A key feature of AMP is that its asymptotic behavior can be rigorously characterized via a one-dimensional recursion known as the state evolution. In particular, as an important application of AMP, the denoising function can be specifically designed such that the corresponding AMP algorithm converges to the solution of a convex optimization problem. It is then widely used to characterize the high-dimensional statistics of convex optimization solutions \cite{LASSO,elasticnet}.  

Compared to existing AMP-based analyzes, there are two unique challenges of our problem. First, the variable $a$ is coupled with all elements in $\x$ through the box constraint in \eqref{eq:MSErelax}, making it difficult to design a corresponding AMP algorithm with separable denoising functions. Second, characterizing $\bH \text{sgn}(\hat{\x})$ requires the correlation information between $\bH$ and $\hat{\x}$, which is more complicated than analyzing only the distribution of $\hat{\x}$. To address the above two challenges, our analytical framework contains two steps. 

\emph{Step 1: Formulate an auxiliary problem with separable constraints.}
 First, we show that the empirical statistics of $\bH\text{sgn}(\hat{\x})$ are asymptotically equivalent to those of  $\bH\text{sgn}({\x}^*)$, where $\x^*$ is the solution to an auxiliary problem with separable constraints.
 \begin{proposition}\label{the1}
Under Assumption \ref{ass}, the following result holds for any pseudo-Lipschitz function $\psi:\mathbb{R}^2\rightarrow \mathbb{R}$:\vspace{-0.15cm}
$$\lim_{K\to\infty}\left|\frac{1}{K}\sum_{k=1}^K\psi(\bh_k^T\text{sgn}(\hat{\x}), s_k)-\frac{1}{K}\sum_{k=1}^K\psi(\bh_k^T\text{sgn}({\x}^*),s_k)\right|\overset{a.s.}{=}0,$$
where \vspace{-0.25cm}
$$\x^*=\arg\min_{\x\in[-a^*,a^*]^N}\left\{\frac{1}{N}\|\bs-\bH\x\|^2+\frac{\rho}{N}\|\x\|^2\right\},\vspace{-0.1cm}$$
 $\hat{\x}$ is the solution to \eqref{eq:MSErelax},  and  $a^*$ is defined in Lemma \ref{lemma:astar}.
\end{proposition}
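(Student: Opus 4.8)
The plan is to collapse the joint optimization over $(a,\x)$ in \eqref{eq:MSErelax} to a scalar optimization over $a$ alone, to show that the resulting random scalar objective converges to the strongly convex function $f$ of Lemma \ref{lemma:astar}, and thereby to locate $\hat\x$ near $\x^*$. For each $a\ge0$ let
$$\phi_N(a)=\min_{\|\x\|_\infty\le a}\Big\{\tfrac1N\|\bs-\bH\x\|^2+\tfrac{\rho}{N}\|\x\|^2\Big\},$$
and let $\x_a$ be the (unique, by $\rho$-strong convexity) inner minimizer, so that \eqref{eq:MSErelax} is exactly $\min_{a\ge0}\{\phi_N(a)+\lambda a^2\}$ with $\hat\x=\x_{\hat a}$ at the optimal $\hat a$, while $\x^*=\x_{a^*}$. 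A convexity argument shows $\phi_N$ is convex and non-increasing in $a$ (any convex combination of feasible points for $a_1,a_2$ is feasible for the corresponding combination of box sizes), so $\Phi_N(a):=\phi_N(a)+\lambda a^2$ is strongly convex and coercive with a unique minimizer $\hat a$.

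First I would establish $\hat a\overset{a.s.}{\to}a^*$. For each fixed $a$ the inner problem is solved by an AMP iteration \eqref{AMP_form} with denoiser $\eta_a(\cdot;\gamma)$, whose state-evolution fixed point is the pair $(\tau_a^2,\gamma_a)$ of Lemma \ref{lemma:unique}. Passing pseudo-Lipschitz test functions through the state evolution gives the almost-sure limit of the separable term $\tfrac1N\|\x_a\|^2$ and, via the Onsager-corrected residual $\bz=\bs-\bH\x_a+\cdots$, of the non-separable term $\tfrac1N\|\bs-\bH\x_a\|^2$; substituting these into $\Phi_N$ yields $\phi_N(a)+\lambda a^2\overset{a.s.}{\to}f(a)$ pointwise. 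Taking this limit along a countable dense set of $a$ and invoking convexity (pointwise convergence of convex functions forces uniform convergence on compacta and convergence of minimizers), together with the strong convexity and uniqueness of $a^*$ from Lemma \ref{lemma:astar}, gives $\hat a\overset{a.s.}{\to}a^*$.

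Next I would upgrade this to $\ell_2$ closeness of the precoders. Writing $F(\x)=\tfrac1N\|\bs-\bH\x\|^2+\tfrac{\rho}{N}\|\x\|^2$, which is strongly convex with modulus at least $2\rho/N$, and using that the minimizer for the smaller box is feasible for the larger one whenever the box sizes are nested, the first-order optimality of the constrained minimizers yields the stability bound $\tfrac1N\|\hat\x-\x^*\|^2\le\tfrac1\rho\,|\phi_N(\hat a)-\phi_N(a^*)|$. Since $\phi_N$ converges uniformly on compacta to a continuous limit and $\hat a\to a^*$, the right-hand side vanishes, so $\tfrac1N\|\hat\x-\x^*\|^2\overset{a.s.}{\to}0$.

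Finally I would transfer this to the statistics of $\bH\,\text{sgn}(\cdot)$; this is the main obstacle, because $\text{sgn}$ is discontinuous and a vanishing $\ell_2$ distance does not by itself control the number of flipped signs. The key extra ingredient is that the state-evolution law of the entries of $\x^*$, namely $\hat X=\eta_{a^*}(\tau_*Z;\gamma_*)$, has a bounded density near the origin (it equals $\tfrac{\tau_*Z}{\gamma_*+1}$ there, with atoms only at $\pm a^*$), so the empirical fraction of entries with $|x^*_i|\le t$ is $O(t)$ almost surely. Splitting the index set according to whether $|x^*_i|\le\delta$ or $|\hat x_i-x^*_i|>\delta$ and optimizing over $\delta$, the fraction of sign disagreements is $O\big((\tfrac1N\|\hat\x-\x^*\|^2)^{1/3}\big)\to0$, whence the difference vector $\bd=\text{sgn}(\hat\x)-\text{sgn}(\x^*)$ satisfies $\tfrac1N\|\bd\|^2\to0$. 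Because $\|\bH\|_{\mathrm{op}}=O(1)$ almost surely, $\tfrac1K\sum_k(\bh_k^T\bd)^2\le\tfrac1K\|\bH\|_{\mathrm{op}}^2\|\bd\|^2\to0$, and the pseudo-Lipschitz bound on $\psi$ together with Cauchy--Schwarz and the a priori boundedness of $\tfrac1K\sum_k(\bh_k^T\text{sgn}(\hat\x))^2$ and $\tfrac1K\sum_k(\bh_k^T\text{sgn}(\x^*))^2$ shows the two empirical averages differ by $o(1)$, which is the claim.
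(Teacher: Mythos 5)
Your proposal follows essentially the same route as the paper: the same nested reformulation of \eqref{eq:MSErelax} over $(a,\x)$, AMP with denoiser $\eta_a(\cdot;\gamma_a)$ for the inner box-constrained problem yielding the a.s.\ limit of the partially minimized objective (the paper's Lemma~\ref{prob:converge}), and then convergence of the outer minimizers $\hat{a}_N \to a^*$ together with the strong-convexity stability bound giving $\frac{1}{N}\|\hat{\x}-\x^*\|^2 \to 0$ (the paper's Lemma~\ref{converge:a}). The one place you go beyond the paper's sketch is the final step: the paper states that the proposition ``follows immediately'' from Lemma~\ref{converge:a}, whereas you correctly point out that, because $\text{sgn}(\cdot)$ is discontinuous, $\ell_2$-closeness of $\hat{\x}$ and $\x^*$ does not by itself control the fraction of flipped signs, and you supply the missing ingredients: anti-concentration of the limiting law $\eta_{a^*}(\tau_* Z;\gamma_*)$ near the origin (bounded density away from the atoms at $\pm a^*$, so the empirical fraction of entries with $|x_i^*|\le t$ is $O(t)$), the $\delta$-splitting argument showing the sign-disagreement fraction vanishes, the a.s.\ boundedness of $\|\bH\|_{\mathrm{op}}$, and the pseudo-Lipschitz/Cauchy--Schwarz transfer to the empirical averages of $\psi(\bh_k^T\text{sgn}(\cdot),s_k)$. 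That filled-in step is sound and is genuinely needed for a complete proof, so your write-up is both consistent with the paper's strategy and more complete at its weakest point.
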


\emph{Step 2: Analyze the auxiliary problem.} We next focus on the auxiliary problem defined in Proposition \ref{the1}. Specifically,  we prove the following result for $\x^*$.
\vspace{-0.1cm}
 \begin{proposition}\label{pro:convergexa}
 Under Assumption \ref{ass}, the following result holds:\vspace{-0.15cm}
$$\lim_{K\to\infty}\frac{1}{K}\sum_{k=1}^K \psi(\bh_k^T\text{sgn}(\bx^*), s_k)=\mathbb{E}[\psi(\bar{\alpha}S+\bar{\beta}^{\frac{1}{2}}Z,S)],\vspace{-0.1cm}$$
where  $\bar{\alpha},\bar{\beta}, S,$ and $Z$ are defined in Theorem \ref{theorem1}.
\end{proposition}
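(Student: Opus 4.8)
The plan is to represent $\x^*$ as the fixed point of an AMP iteration and then apply state evolution in two stages. First I would run \eqref{AMP_form} with the separable denoiser $\eta_{a^*}(\cdot;\gamma_t)$ of Lemma \ref{lemma:unique}, whose proximal form precisely encodes the box projection together with the ridge shrinkage of the auxiliary problem in Proposition \ref{the1}. Since $\rho>0$ makes the objective strongly convex, the AMP fixed point is unique and coincides with $\x^*$, while the scalar recursion converges to the pair $(\tau_*^2,\gamma_*)$ solving \eqref{condition0} (existence and uniqueness supplied by Lemma \ref{lemma:unique}, positivity of $a^*$ by Lemma \ref{lemma:astar}). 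Writing $\br^*=\x^*+\bH^T\bz^*$ for the effective observation at convergence, the Bayati--Montanari state-evolution theorem gives that the empirical law of $(x_i^*,r_i^*)$ converges to that of $(\eta_{a^*}(\tau_* Z;\gamma_*),\tau_* Z)=(\hat X,\tau_* Z)$. In particular $\mathrm{sgn}(\x^*)=\mathrm{sgn}(\br^*)$, so the target reduces to the asymptotics of $\bH\,\mathrm{sgn}(\br^*)$.

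The second and decisive stage is to pin down the joint empirical law of $(\bh_k^T\mathrm{sgn}(\x^*),s_k)$. The clean way I would do this is through the conditional (cavity) law of $\bH$ given the AMP trajectory: conditioned on the realized iterates, $\mathrm{sgn}(\br^*)$ is measurable, and $\bH$ is Gaussian subject to the linear constraints $\bH\x^*=\bs-\tfrac{\rho}{\gamma_*}\bz^*$ (column side) and $\bH^T\bz^*=\br^*-\x^*$ (row side). Decomposing $\bH\,\mathrm{sgn}(\br^*)$ accordingly produces a part determined by these constraints plus a fresh Gaussian part supported on the orthogonal complement; the part carrying $\bs$ yields the signal term $\bar\alpha s_k$ and the fresh part yields $\sqrt{\bar\beta}\,Z$. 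Crucially---and this is what distinguishes the problem from a routine AMP output---$\mathrm{sgn}(\br^*)$ depends on $\bH$ through $\bH^T\bz^*$, so the row-side constraint cannot be ignored; it is exactly here, via $\tfrac1N\langle\br^*,\mathrm{sgn}(\br^*)\rangle\to\mathbb{E}|\tau_* Z|=\tau_*\sqrt{2/\pi}$, that the value $\bar\alpha=\sqrt{2/\pi}/(\delta\tau_*)$ arises, equivalently $\delta^{-1}$ times the expected divergence $\mathbb{E}[\mathrm{sgn}'(\tau_* Z)]$ of the sign map at the effective observation.

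Assembling the two contributions, I would verify that the $\bs$-correlated component has slope $\bar\alpha$ and that the residual fluctuation is asymptotically $\mathcal{N}(0,\bar\beta)$ with $\bar\beta=\delta^{-1}\mathbb{E}[(\bar\alpha|\hat X|-1)^2]$, using the marginal law $x_i^*\sim\hat X$ from the first stage to evaluate $\tfrac1N\|\x^*\|^2\to\mathbb{E}\hat X^2$ and $\tfrac1N\|\x^*\|_1\to\mathbb{E}|\hat X|$. The stationarity condition $f'(a^*)=0$ from Lemma \ref{lemma:astar} is what forces the cross terms between the column- and row-side contributions (i.e.\ the residual coupling to $\bz^*$) to cancel, so that the limit is a pure ``signal plus independent Gaussian'' with the stated constants; without optimizing over $a$ one would be left with an extra $\bz^*$-dependent term. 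Finally I would promote the scalar convergence to arbitrary pseudo-Lipschitz $\psi$ and pass from smooth approximations of $\mathrm{sgn}$ to $\mathrm{sgn}$ itself, which is licit because the limit law $\tau_* Z$ has no atom at the origin, so $\mathrm{sgn}$ is almost-everywhere continuous with respect to it.

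The hardest part is unquestionably the second stage: the sign map is neither Lipschitz nor a standard AMP denoiser, and $\mathrm{sgn}(\x^*)$ is correlated with $\bH$ through $\bH^T\bz^*$, so a naive leave-one-out estimate fails---the sign flips between the cavity solution $\x^{*(k)}$ and $\x^*$ already contribute at the leading $O(1)$ order. Controlling these flips and justifying the conditional-law decomposition for the non-smooth $\mathrm{sgn}$ (through a smoothing argument combined with the absence of an atom at $0$, and the $a^*$-stationarity needed to close the algebra) is where the real work lies.
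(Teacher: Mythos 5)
Your overall skeleton (analyze the auxiliary problem by AMP, then handle the sign map through its correlation with $\bH$) matches the paper's, and your constants are right: the divergence identification $\bar{\alpha}=\delta^{-1}\mathbb{E}[\mathrm{sgn}'(\tau_* Z)]$ and the variance algebra built from $\frac{1}{N}\|\x^*\|^2\to\mathbb{E}[\hat{X}^2]$ and $\frac{1}{N}\|\x^*\|_1\to\mathbb{E}[|\hat{X}|]$ are exactly what the paper computes. But your second stage has a genuine gap. You assert that, conditioned on the realized iterates, $\bH$ is Gaussian subject to only the two fixed-point constraints $\bH\x^*=\bs-\frac{\rho}{\gamma_*}\bz^*$ and $\bH^T\bz^*=\br^*-\x^*$. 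That is not the conditional law: $(\x^*,\bz^*)$ are determined by $\bH$ through the fixed-point equations (equivalently, through the entire AMP trajectory), so the correct conditioning --- the Bolthausen/Bayati--Montanari scheme --- involves the constraint directions from \emph{all} iterations, and there is no way to condition ``directly at the fixed point.'' Making your decomposition rigorous would force you to re-derive state evolution for a sign-composed denoiser, which is precisely what the paper does instead, and more cleanly: it appends one extra AMP iteration \eqref{postprocessing} with the composite denoiser $\mathrm{sgn}\circ\eta_{a^*}$, so that $\bh_k^T\tilde{\x}_{t+1}=\alpha_t s_k-\alpha_t b_{t,k}+\tilde{b}_{t+1,k}$ is an exact algebraic identity from the residual recursion; the joint Gaussianity of $(b_{t,k},\tilde{b}_{t+1,k},s_k)$, the Onsager limit $\alpha_t\to\bar{\alpha}$, and the covariance $\mathbb{E}[Z_t\tilde{Z}_{t+1}]\to\frac{1}{\delta}\mathbb{E}[|\hat{X}|]$ are then all read off from standard AMP results (Lemma \ref{AMP_property}), followed by $t\to\infty$.

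Second, your claim that the stationarity condition $f'(a^*)=0$ from Lemma \ref{lemma:astar} is what forces the cross terms to cancel --- and that without optimizing over $a$ one would be left with an extra $\bz^*$-dependent term --- is wrong. The paper's proof of this proposition never uses optimality of $a^*$: the conclusion (with $\tau_a,\gamma_a$, and $\eta_a$ in place of the starred quantities) holds for the box-constrained problem at \emph{any} fixed $a>0$, because the signal-plus-independent-Gaussian structure is supplied by state evolution itself. Optimality of $a^*$ enters only in Proposition \ref{the1}, to identify $\hat{\x}$ with $\x^*$. Moreover, a leftover $\bz^*$-dependent term would not even break the claimed structure, since $\bz_t=\bs-\bb_t$ is itself asymptotically a signal term plus an independent Gaussian; it would only shift the constants. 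If you pursue your cavity computation expecting a cancellation driven by $f'(a^*)=0$, the algebra will not close in the way you anticipate.
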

 
Combining Propositions \ref{the1} and \ref{pro:convergexa} gives Theorem \ref{theorem1}.   In the following, we introduce the main ideas for proving Propositions \ref{the1} and \ref{pro:convergexa} using AMP.
  \vspace{-0.15cm}
  \subsection{Proof of Proposition \ref{the1}}\vspace{-0.05cm}
  Note that when $a$ is fixed, the constraint in \eqref{eq:MSErelax} becomes separable. This motivates a nested formulation of \eqref{eq:MSErelax}, with an outer optimization over  $a$ and an inner problem over $\x$ with a fixed $a$, as follows: 
\begin{equation}\label{def:aN}
\min_{a\geq 0}f_N(a;\bs,\bH)+\lambda a^2,
\end{equation} where \vspace{-0.1cm}
\begin{equation}\label{def:fN}
f_N(a;\bs,\bH)=\min_{\x\in[-a,a]^N}~\left\{\frac{1}{N}\|\bs-\bH\x\|^2+\frac{\rho}{N}\|\x\|^2\right\}.
\end{equation}
Let $\x_a$ be the solution to \eqref{def:fN} with a fixed $a\geq 0$. Under the above notations, $\x^*$ defined in Proposition \ref{the1} can be written as $\x^*=\x_{a^*}$, and the optimal solution to \eqref{eq:MSErelax} can be expressed as $(\hat{a}_N,\x_{\hat{a}_N})$, where $\hat{a}_N$ is the solution to \eqref{def:aN}.

The optimization problem in \eqref{def:fN} has a separable constraint and is therefore amenable to analysis using AMP.  Specifically, we can show that the AMP algorithm corresponding to \eqref{def:fN} is obtained by setting $\eta_t(x)=\eta_a(x;\gamma_a)$ in \eqref{AMP_form}, where $\eta_a$ and $\gamma_a$ are defined in Lemma \ref{lemma:unique}, and the state evolution condition of the AMP algorithm is given by \vspace{-0.1cm}
\begin{equation}\label{SE}
\tau_{t+1}^2=1+\mathbb{E}[\eta_a^2(\tau_tZ;\gamma_a)],~~Z\sim\mathcal{N}(0,1),
\end{equation} 
where $\tau_0^2=1$. 
 In particular, the condition \eqref{conditiona} is the fixed point condition of \eqref{SE}, and the condition \eqref{condition0b} relates the limiting point of the AMP algorithm to the first-order optimality condition of problem \eqref{def:fN}. 
Following a procedure similar to that in \cite{LASSO,elasticnet}, we get the following result, which characterizes the convergence of the  AMP algorithm, as well as the asymptotic properties of the optimal solution and value of \eqref{def:fN}. 
\vspace{-0.05cm}
\begin{lemma}\label{prob:converge}
Let $(\tau_a^2,\gamma_a)$ be the solution to \eqref{condition0} with a fixed $a\geq 0$, and let $\{(\x_{t+1},\bz_t)\}_{t\geq 0}$ be the sequence generated by the AMP algorithm with $\eta_t(x)=\eta_a(x;\gamma_a)$ and $\x_0=\mathbf{0}$. 
The following results hold under Assumption \ref{ass}:
\begin{enumerate}
\item[(i)] $ \lim_{t\to\infty}\lim_{N\to\infty}\frac{1}{N}\|\x_t-\x_a\|^2\overset{a.s.}{=}0$;
\item[(ii)] $\lim_{N\to\infty}\frac{1}{N}\sum_{i=1}^N\varphi(x_{a,i})\overset{a.s.}{=}\mathbb{E}\left[\varphi(X_a)\right]$ for any pseudo-Lipschitz function $\varphi$, 
where 
$X_a=\eta_a(\tau_aZ;\gamma_a)$ with $Z\sim\mathcal{N}(0,1)$;
\item[(iii)]$\lim_{N\to\infty} f_{N}(a;\bs,\bH)\overset{a.s.}{=}\delta\rho(\tau_a^2-1)+\delta\rho^2\tau_a^2\gamma_a^{-2}.$
\end{enumerate}
\end{lemma}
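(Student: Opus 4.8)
The plan is to follow the template of the AMP-based analyses of strongly convex regularized least squares in \cite{LASSO,elasticnet}, adapted to the box-constrained objective \eqref{def:fN}. First I would make the correspondence between the AMP recursion and the optimization problem explicit. The denoiser $\eta_a(\cdot;\gamma_a)$ is exactly the proximal operator of $\frac{\gamma_a}{2}(\cdot)^2+\iota_{[-a,a]}(\cdot)$, since $\eta_a(v;\gamma_a)=\arg\min_{u\in[-a,a]}\{\frac{\gamma_a}{2}u^2+\frac{1}{2}(u-v)^2\}$. Consequently the fixed-point equation of \eqref{AMP_form} with $\eta_t=\eta_a(\cdot;\gamma_a)$ reproduces the KKT system of \eqref{def:fN}, provided the scalar pair $(\tau_a^2,\gamma_a)$ is calibrated through \eqref{condition0}: equation \eqref{conditiona} is the fixed point of the state-evolution variance, while \eqref{condition0b} matches the residual scaling $1-\delta^{-1}\mathbb{E}[\eta_a']=\rho/\gamma_a$ to the strong-convexity coefficient $\rho$. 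With this setup I would invoke the rigorous AMP state-evolution theorem \cite{AMP} under Assumption \ref{ass}(ii): for each fixed $t$, the joint empirical distribution of $(\x_t,\x_t+\bH^T\bz_t)$ converges almost surely, so that $\frac{1}{N}\sum_i\varphi(x_{t,i})\overset{a.s.}{\to}\mathbb{E}[\varphi(\eta_a(\tau_{t-1}Z;\gamma_a))]$ and $\frac{1}{K}\|\bz_t\|^2\overset{a.s.}{\to}\tau_t^2$, with $\tau_t$ obeying \eqref{SE}.

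Next I would establish the convergence of the scalar recursion. The map in \eqref{SE} is monotone increasing and bounded (as $\eta_a^2\le a^2$), and starts at $\tau_0^2=1$; hence it converges, and by Lemma \ref{lemma:unique} its limit is the unique fixed point, so $\tau_t^2\to\tau_a^2$ as $t\to\infty$. The heart of the proof is claim (i), the convergence of the AMP iterates to the true minimizer $\x_a$. Here the crucial leverage is strong convexity: the objective $F_N(\x):=\frac{1}{N}\|\bs-\bH\x\|^2+\frac{\rho}{N}\|\x\|^2$ has Hessian $\succeq\frac{2\rho}{N}\bI$, and since every iterate is feasible ($\x_t\in[-a,a]^N$ because $\eta_a$ projects onto the box), the first-order optimality of $\x_a$ yields $\frac{\rho}{N}\|\x_t-\x_a\|^2\le F_N(\x_t)-f_N(a;\bs,\bH)$. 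It then suffices to show the objective gap vanishes in the iterated limit $\lim_{t\to\infty}\lim_{N\to\infty}$; this follows because the AMP iterates become asymptotically stationary, as the calibration \eqref{condition0b} forces the fixed-point equations to coincide with the KKT conditions, exactly as in \cite{LASSO,elasticnet}. I expect this interchange-of-limits step, i.e.\ transferring ``state evolution has converged'' into ``the iterate is $\ell_2$-close to the optimizer,'' to be the main obstacle; the strong convexity introduced by the $\ell_2$ regularizer is precisely what makes it tractable.

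Finally, claims (ii) and (iii) would follow by combining (i) with state evolution. For (ii), the empirical distribution of $\x_t$ converges to the law of $\eta_a(\tau_{t-1}Z;\gamma_a)$; letting $t\to\infty$ and using $\frac{1}{N}\|\x_t-\x_a\|^2\to0$, $\tau_{t-1}\to\tau_a$, together with pseudo-Lipschitz continuity, transfers the limit to $\x_a$, giving $\frac{1}{N}\sum_i\varphi(x_{a,i})\to\mathbb{E}[\varphi(X_a)]$ with $X_a=\eta_a(\tau_aZ;\gamma_a)$. For (iii), I would split $f_N(a;\bs,\bH)=\frac{1}{N}\|\bs-\bH\x_a\|^2+\frac{\rho}{N}\|\x_a\|^2$: the second term tends to $\rho\,\mathbb{E}[X_a^2]=\rho\delta(\tau_a^2-1)$ by (ii) with $\varphi(x)=x^2$ and \eqref{conditiona}. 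For the first term, the fixed-point residual identity $\bs-\bH\x_a=(1-\delta^{-1}\langle\eta_a'\rangle)\bz_a$, the state-evolution limit $\frac{1}{K}\|\bz_a\|^2\to\tau_a^2$, and \eqref{condition0b} give $\frac{1}{N}\|\bs-\bH\x_a\|^2\to\delta\rho^2\tau_a^2\gamma_a^{-2}$. Summing the two limits yields $\delta\rho(\tau_a^2-1)+\delta\rho^2\tau_a^2\gamma_a^{-2}$, as claimed.
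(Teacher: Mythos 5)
Your proposal is correct and follows essentially the same route as the paper, which itself only sketches this proof by noting that the denoiser $\eta_a(\cdot;\gamma_a)$ yields the AMP algorithm for \eqref{def:fN}, that \eqref{conditiona} is the fixed point of the state evolution \eqref{SE}, that \eqref{condition0b} calibrates the limit to the first-order optimality condition, and that the rest follows the procedure of \cite{LASSO,elasticnet}; your additional details (prox identification, strong convexity converting the vanishing objective gap into claim (i), the residual identity giving claim (iii)) are exactly the ingredients of that procedure. The only slight imprecision is attributing uniqueness of the fixed point of \eqref{conditiona} at fixed $\gamma_a$ to Lemma \ref{lemma:unique}, which only asserts joint uniqueness; this is easily repaired by noting that $u\mapsto 1+\delta^{-1}\mathbb{E}\bigl[\min\bigl(uZ^2/(1+\gamma_a)^2,\,a^2\bigr)\bigr]$ is increasing, concave, and positive at $u=0$, hence has a unique fixed point.
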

As implied by Lemma \ref{prob:converge}, the objective function in problem \eqref{def:aN} converges to $f(a)$ defined in Lemma \ref{lemma:astar}. This naturally leads to the following lemma, which establishes the convergence of their minimizers, $\hat{a}_N$ and $a^*$, as well as the corresponding solutions of \eqref{def:fN}, $\hat{\x}=\x_{\hat{a}_N}$ and $\x^*=\x_{a^*}$.
\begin{lemma}\label{converge:a}
 Under Assumption \ref{ass},  the following results hold:
 \begin{enumerate}
 \item[(i)]$\displaystyle\lim_{N\to\infty}\hat{a}_N\overset{a.s.}{=}a^*,$ where $\hat{a}_N$ is the solution to \eqref{def:aN} and $a^*$ is given in Lemma \ref{lemma:astar};
\item[(ii)]
$\lim_{N\to\infty}\frac{1}{{N}}{\|\hat{\x}-\x^*\|^2}\overset{a.s.}{=}0,$ where $\hat{\x}$ is the solution to \eqref{eq:MSErelax} and $\x^{*}$ is defined in Proposition \ref{the1}.
\end{enumerate}
\end{lemma}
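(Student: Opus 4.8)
The plan is to read part (i) as a convergence-of-minimizers statement for the convex outer objectives $F_N(a):=f_N(a;\bs,\bH)+\lambda a^2$, and to obtain part (ii) from (i) via a strong-convexity perturbation bound for the inner problem \eqref{def:fN}. Two structural facts drive the argument. First, the set $\{(a,\x):|x_i|\le a,\ i=1,\dots,N\}$ is convex, being an intersection of the half-spaces $x_i\le a$ and $-x_i\le a$; hence $f_N(\cdot;\bs,\bH)$, the partial minimization of the jointly convex map $(a,\x)\mapsto\frac1N\|\bs-\bH\x\|^2+\frac{\rho}{N}\|\x\|^2$ over this set, is convex in $a$, and $F_N$ is strongly convex with modulus $2\lambda$ uniformly in $N$. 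Second, $F_N$ is equicoercive: $F_N(a)\ge\lambda a^2$ since $f_N\ge 0$, while $F_N(0)=\frac1N\|\bs\|^2=\frac{K}{N}\to\delta$, so every minimizer obeys $\lambda\hat a_N^2\le F_N(0)$ and the $\hat a_N$ lie in a fixed compact interval $[0,M]$ with $M:=2\sqrt{\delta/\lambda}$ for $N$ large; the same bound places $a^*\in(0,M)$.

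Lemma \ref{prob:converge}(iii) gives, for each fixed $a$, $F_N(a)\overset{a.s.}{\to}F(a):=f(a)$, with $f$ the strongly convex, continuous limit of Lemma \ref{lemma:astar}. Fixing a countable dense set $\mathcal{Q}\subset[0,M]$ and intersecting the corresponding probability-one events yields one event of probability one on which $F_N(q)\to F(q)$ for every $q\in\mathcal{Q}$; convexity of the $F_N$ then upgrades this to uniform convergence on compact subsets of $[0,M]$. On this event the standard convergence-of-minimizers argument for equicoercive convex functions applies: any subsequential limit $\bar a$ of $\{\hat a_N\}$ obeys $F(\bar a)\le\liminf_N F_N(\hat a_N)\le\lim_N F_N(a^*)=F(a^*)$, so $\bar a$ minimizes $F$, and uniqueness of $a^*$ forces $\bar a=a^*$. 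Since $\{\hat a_N\}$ is bounded, $\hat a_N\overset{a.s.}{\to}a^*$, which is (i).

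For (ii) I would exploit strong convexity of the inner objective $h(\x):=\frac1N\|\bs-\bH\x\|^2+\frac{\rho}{N}\|\x\|^2$, whose Hessian dominates $\frac{2\rho}{N}\bI$, so that $h(\by)\ge h(\x)+\langle\nabla h(\x),\by-\x\rangle+\frac{\rho}{N}\|\by-\x\|^2$. For $a_1\le a_2$ the boxes nest, so $\x_{a_1}$ is feasible for the $a_2$-problem and first-order optimality of $\x_{a_2}$ gives $\langle\nabla h(\x_{a_2}),\x_{a_1}-\x_{a_2}\rangle\ge 0$; together these yield the perturbation bound $\frac{\rho}{N}\|\x_{a_1}-\x_{a_2}\|^2\le f_N(a_1;\bs,\bH)-f_N(a_2;\bs,\bH)$. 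Taking $a_1=\min(\hat a_N,a^*)$ and $a_2=\max(\hat a_N,a^*)$ gives $\frac{\rho}{N}\|\hat\x-\x^*\|^2\le f_N(a_1;\bs,\bH)-f_N(a_2;\bs,\bH)$. By (i) both $a_1,a_2\overset{a.s.}{\to}a^*$, and the uniform convergence established above, together with continuity of the limit, forces the right-hand side to $0$; dividing by $\rho>0$ proves $\frac1N\|\hat\x-\x^*\|^2\overset{a.s.}{\to}0$.

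The step I expect to be the main obstacle is the passage from the per-$a$ almost-sure convergence of Lemma \ref{prob:converge}(iii) to a single probability-one event on which convergence holds simultaneously in $a$. This is precisely where convexity is indispensable: it converts pointwise convergence on the countable dense set $\mathcal{Q}$ into uniform convergence on compacta, and that same uniform convergence both closes the subsequence argument in (i) and controls the data-dependent differences $f_N(\min(\hat a_N,a^*);\bs,\bH)-f_N(\max(\hat a_N,a^*);\bs,\bH)$ in (ii). The remaining ingredients—equicoercivity, the subsequence extraction, and the one-line strong-convexity perturbation estimate—are routine.
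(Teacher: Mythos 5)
Your proposal is correct and follows essentially the same route the paper intends (the paper only sketches it): Lemma \ref{prob:converge}(iii) supplies pointwise a.s.\ convergence of the outer objective in \eqref{def:aN} to the strongly convex, continuous $f$ of Lemma \ref{lemma:astar}, convexity upgrades this to locally uniform convergence and hence convergence of the unique minimizers for part (i), and the $\ell_2$-induced strong convexity in $\x$ (which the paper explicitly says is added ``for the analysis'') gives the perturbation bound $\frac{\rho}{N}\|\x_{a_1}-\x_{a_2}\|^2\leq f_N(a_1;\bs,\bH)-f_N(a_2;\bs,\bH)$ driving part (ii). The one detail to spell out is that the dense-set-plus-convexity argument gives uniform convergence only on compact subsets of the interior $(0,M)$; this is harmless because $F_N(0)=K/N\to\delta=f(0)$ holds directly, $f$ is continuous at $0$, and $a^*>0$, so a one-line slope estimate rules out subsequential limits of $\hat{a}_N$ at the boundary.
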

The desired result in Proposition \ref{the1} follows immediately from  Lemma \ref{converge:a}.
  \subsection{Proof of Proposition \ref{pro:convergexa}}
  In this part, we provide the idea for proving Proposition \ref{pro:convergexa}.  Recall the notation $(\tau_*^2,\gamma_*)=(\tau_{a^*}^2, \gamma_{a^*})$, and let $\{(\x_{t+1},\bz_t)\}_{t\geq 0}$ be the sequence generated by the AMP algorithm with $\eta_t(x)=\eta_{a^*}(x;\gamma_{*})$.  
We define the following post-processing step:
   \begin{subequations}\label{postprocessing}
 \begin{align}
 \tilde{\x}_{t+1}&=\text{sgn}(\x_{t+1}),\label{xq}\\
\tilde{\bz}_{t+1}&=\bs-\bH\tilde{\x}_{t+1}+\frac{1}{\delta}\left<(\text{sgn}\circ\eta_{a^*})'(\bx_t+\bH^T\bz_t;\gamma_{*})\right>\bz_t.\label{zq}
 \end{align}
 \end{subequations}
 Using the relation that $\x_{t+1}=\eta_{a^*}(\x_t;\gamma_{*})$, \eqref{postprocessing} can be regarded as performing an AMP iteration on $(\x_t,\bz_t)$ with $\eta_t(x)=\text{sgn}\circ\eta_{a^*}(x,\gamma_{*})$.   This allows us to analyze the distribution of $\bH \tilde{\x}_{t+1}$ through AMP theory\footnote{
 Strictly speaking, the state evolution theory requires a Lipschitz denoiser, which the sign function is not. For simplicity, we work with $\text{sgn}(\cdot)$ to illustrate the proof. Rigor can be ensured by smoothing $\text{sgn}(\cdot)$ (e.g., via a mollifier) to obtain a Lipschitz  approximation $q_\epsilon(\cdot)$, and then letting $\epsilon\to0$. 
 }, and the desired prediction for $\bH \text{sgn}(\x^*)$ is then obtained by taking the limit $t \to \infty$.
 
Next, we give a proof sketch of Proposition \ref{pro:convergexa} based  on the above argument.  Following the convention in AMP, let $\tilde{\bb}_{t+1}=\bs-\tilde{\bz}_{t+1}$, $\bb_t=\bs-\bz_t,$ and $\br_{t+1}=\x_t+\bH^T\bz_t$. Denote further $\alpha_t=\frac{1}{\delta}\left<(\text{sgn}\circ\eta_{a^*})'(\br_{t+1};\gamma_{*})\right>$. 
 Then from \eqref{zq}, 
the  goal is to analyze 
   \begin{equation}\label{proof3.2:1}
   \lim_{t\to\infty}\lim_{K\to\infty}\frac{1}{K}\sum_{k=1}^K\psi\left({\alpha}_ts_k-{\alpha}_t{b}_{t,k}+\tilde{b}_{t+1,k},s_k\right).
   \end{equation}
 Before proceeding, we list a few results drawn from  AMP literature that will be used in our derivation; see, e.g., \cite[Lemma 1]{AMP}.
  \begin{lemma}\label{AMP_property}
  The following results hold under Assumption \ref{ass}:
\begin{itemize}
\item[(i)] 
$\lim_{N\to\infty}\frac{1}{N}\sum_{n=1}^N\varphi(r_{t+1,n})=\mathbb{E}[\tau_t Z]$ for any pseudo-Lipschitz function $\varphi$, 
where $Z\sim\mathcal{N}(0,1)$, $\tau_t$ is defined in \eqref{SE}  with $a=a^*$ and satisfies $\lim_{t\to\infty}\tau_t^2=\tau_{*}^2$; 
\item[(ii)]
$\lim_{K\to\infty}\langle \bb_t,\tilde{\bb}_{t+1}\rangle=\frac{1}{\delta}\lim_{N\to\infty}\langle \x_t,\tilde{\x}_{t+1}\rangle$, where $\left<\mathbf{u},\mathbf{v}\right>$ takes the inner product of $\mathbf{u}$ and $\mathbf{v}$ normalized by dimension;
\item[(iii)]
$\lim_{K\to\infty}\frac{1}{K}\sum_{i=1}^K\varphi(b_{t,k},\tilde{b}_{t+1,k},s_k)=\mathbb{E}[\varphi(Z_t,\tilde{Z}_{t+1},S)]$  for any pseudo-Lipschitz function $\varphi$,
where $(Z_{t},\tilde{Z}_{t+1})$ are jointly Gaussian distributed and are independent of $S$, with $Z_{t}\sim\mathcal{N}(0,\sigma_t^2)$ and $\tilde{Z}_{t+1}\sim\mathcal{N}(0,\tilde{\sigma}_{t+1}^2)$, $\sigma_t^2=\frac{1}{\delta}\mathbb{E}[\eta_{a^*}^2(\tau_{t-1}Z;\gamma_{*})],$
and $\tilde{\sigma}_{t+1}^2=\frac{1}{\delta}$.
\end{itemize}
\end{lemma}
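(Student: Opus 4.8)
The plan is to derive all three statements as consequences of the AMP state evolution (SE) master theorem for i.i.d.\ Gaussian sensing matrices \cite{AMP}, applied to an \emph{augmented} recursion. The key structural observation, already noted above, is that the post-processing \eqref{postprocessing} applies a \emph{second} denoiser $\text{sgn}\circ\eta_{a^*}(\cdot\,;\gamma_*)$ to the \emph{same} effective observation $\br_{t+1}=\bx_t+\bH^T\bz_t$ that drives the base iteration $\bx_{t+1}=\eta_{a^*}(\br_{t+1};\gamma_*)$. First I would stack $(\bx_t,\tilde{\bx}_{t+1})$ and the residuals $(\bz_t,\tilde{\bz}_{t+1})$ into a single multivariate AMP with a shared Gaussian observation and invoke the master theorem. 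The theorem simultaneously yields that the entries of $\br_{t+1}$ are asymptotically i.i.d.\ $\mathcal{N}(0,\tau_t^2)$, and that the Onsager-corrected residuals $\bb_t=\bs-\bz_t=\bH\bx_t-(\text{Onsager})$ and $\tilde{\bb}_{t+1}=\bs-\tilde{\bz}_{t+1}=\bH\tilde{\bx}_{t+1}-(\text{Onsager})$ are jointly Gaussian in the empirical sense, independent of $\bs$, with covariances determined by the normalized inner products of the iterates.

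For part (i), the SE theorem gives convergence of the pseudo-Lipschitz averages of $\br_{t+1}$ to $\mathbb{E}[\varphi(\tau_t Z)]$, where $\tau_t$ follows the scalar recursion \eqref{SE} with $\tau_0^2=1$. It then remains to show $\tau_t^2\to\tau_*^2$. Here I would use that the SE map in \eqref{SE} is monotone and has a unique fixed point characterized by \eqref{conditiona} (Lemma \ref{lemma:unique}); a standard monotone-convergence (or contraction) argument, started from $\tau_0^2=1$, then forces $\tau_t^2$ to converge to this fixed point $\tau_*^2$.

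Parts (iii) and (ii) are read off from the covariance structure of the augmented SE. For the variances in (iii), I would use that $\bx_t=\eta_{a^*}(\br_t;\gamma_*)$ together with part (i), so $\tfrac{1}{N}\|\bx_t\|^2\to\mathbb{E}[\eta_{a^*}^2(\tau_{t-1}Z;\gamma_*)]$, while $\tilde{\bx}_{t+1}$ has $\pm1$ entries and hence $\tfrac{1}{N}\|\tilde{\bx}_{t+1}\|^2=1$; the SE rule that a residual variance equals $\delta^{-1}$ times the normalized squared norm of the corresponding iterate then gives $\sigma_t^2=\delta^{-1}\mathbb{E}[\eta_{a^*}^2(\tau_{t-1}Z;\gamma_*)]$ and $\tilde{\sigma}_{t+1}^2=\delta^{-1}$. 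Statement (ii) is the cross-covariance from the same computation: after the Onsager correction, $\bH$ behaves as a $\delta^{-1/2}$-scaled isometry on the decorrelated iterate directions, so $\langle\bb_t,\tilde{\bb}_{t+1}\rangle\to\delta^{-1}\langle\bx_t,\tilde{\bx}_{t+1}\rangle$; equivalently, this is the covariance $\mathbb{E}[Z_t\tilde{Z}_{t+1}]$ of the jointly Gaussian pair in (iii).

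The main obstacle is that $\text{sgn}$, and hence the post-processing denoiser $\text{sgn}\circ\eta_{a^*}$ together with the derivative appearing in the Onsager term of \eqref{zq}, is not Lipschitz, so the master SE theorem does not apply verbatim. Following the remark in the footnote, I would mollify $\text{sgn}$ into a Lipschitz approximation $q_\epsilon$, apply the SE theorem to the resulting smoothed augmented AMP, and then let $\epsilon\to0$. This passage to the limit is justified because the limiting law of $\bx_{t+1}=\eta_{a^*}(\tau_t Z;\gamma_*)$ assigns no mass to the single discontinuity point of $\text{sgn}$, namely $\mathbb{P}(\eta_{a^*}(\tau_t Z;\gamma_*)=0)=\mathbb{P}(Z=0)=0$ whenever $\tau_t>0$; thus $\text{sgn}\circ\eta_{a^*}$ is almost surely continuous under the limiting measure, and a uniform-integrability / dominated-convergence argument transfers the SE conclusions from $q_\epsilon$ to $\text{sgn}$. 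Verifying this atom-free property and the requisite uniform integrability is the only nonstandard part; the remainder is a routine instantiation of the AMP SE theory.
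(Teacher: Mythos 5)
Your proposal is correct and follows essentially the same route as the paper: the paper does not prove Lemma \ref{AMP_property} itself but presents it as an instantiation of the state evolution master theorem from the AMP literature (citing \cite[Lemma 1]{AMP}), with the non-Lipschitz sign denoiser handled exactly as you describe --- via the mollification argument the paper relegates to its footnote. Your sketch simply fleshes out that citation (augmented two-denoiser recursion sharing the observation $\br_{t+1}$, monotone convergence of \eqref{SE} to the fixed point of Lemma \ref{lemma:unique}, and the variance/cross-covariance bookkeeping), which is consistent with the paper's intended justification.
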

By Lemma \ref{AMP_property} (i) and the Stein's Lemma, we have 
$$
\begin{aligned}\lim_{t\to\infty}\alpha_t&
=\frac{1}{\delta\tau_{*}}\mathbb{E}[Z\text{sgn}(\eta_{a^*}(\tau_{*}Z;\gamma_{*}))]=\bar{\alpha}.
\end{aligned}$$
Combining this with \eqref{proof3.2:1} and Lemma \ref{AMP_property} (iii),  it suffices to focus on 
$
\lim_{t\to\infty}\mathbb{E}\left[\psi(\bar{\alpha} S-\bar{\alpha}Z_t+\tilde{Z}_{t+1},S)\right].$
Since $(Z_t,\tilde{Z}_{t+1})$ are jointly Gaussian distributed, the distortion term $\bar{Z}_t:=-\bar{\alpha}Z_t+\tilde{Z}_{t+1}$ is also Gaussian; specifically,  $\bar{Z}_t\sim\mathcal{N}(0,\text{var}_t)$ with 
$\text{var}_t=\bar{\alpha}^2\sigma_t^2+\tilde{\sigma}_{t+1}^2-2\bar{\alpha}\mathbb{E}[Z_t\tilde{Z}_{t+1}].$
 The remaining task is to show that $\lim_{t\to\infty}\text{var}_t=\bar{\beta}.$
 From Lemma \ref{AMP_property} (iii) and the definition of $\hat{X}$ in Theorem \ref{theorem1}, we have 
 $\lim_{t\to\infty}\sigma_t^2=\frac{1}{\delta}\mathbb{E}[\hat{X}^2]$ and $\lim_{t\to\infty}\tilde{\sigma}_{t+1}^2=\frac{1}{\delta}.$
In addition, 
 $$
 \begin{aligned}
 \lim_{t\to\infty}\mathbb{E}[Z_t\tilde{Z}_{t+1}]
 &\overset{(a)}{=}\frac{1}{\delta}\lim_{t\to\infty}\lim_{N\to\infty}\langle \x_t,\tilde{\x}_{t+1}\rangle\\
 &\overset{(b)}{=}\,\frac{1}{\delta}\,\lim_{t\to\infty}\lim_{N\to\infty}\langle \eta_{a^*}(\br_t;\gamma_{*}),\text{sgn}(\br_{t+1})\rangle\\
 &\overset{(c)}{=}\frac{1}{\delta}\mathbb{E}[\eta_{a^*}(\tau_{*} Z;\gamma_{*})\text{sgn}(\tau_{*} Z)]\overset{(d)}{=}\frac{1}{\delta}\mathbb{E}[\hat{X}],
 \end{aligned}$$
 where (a) applies Lemma \ref{AMP_property} (ii) and (iii), (b) holds since  $\text{sgn}\circ\eta_{a^*}(x;\gamma_{*})=\text{sgn}(x)$, (c) follows from Lemma \ref{AMP_property} (i) and the fact that $\lim_{K\to\infty}\lim_{N\to\infty}\frac{1}{N}\|\br_{t+1}-\br_t\|^2\overset{a.s.}{=}0$ (which holds since the AMP algorithm converges), and (d) uses $\eta_{a^*}(x;\gamma_{*}) \text{sgn}(x)=|\eta_{a^*}(x;\gamma_{*})|$ and the definition of $\hat{X}$. 
  Combining the above gives $\lim_{t\to\infty}\text{var}_t=\bar{\beta},$ which completes the proof of Proposition \ref{pro:convergexa}. 
  
  \vspace{-0.1cm}
  \section{Simulation Results}
  \vspace{-0.1cm}
  \begin{figure}
\includegraphics[width=0.3\textwidth]{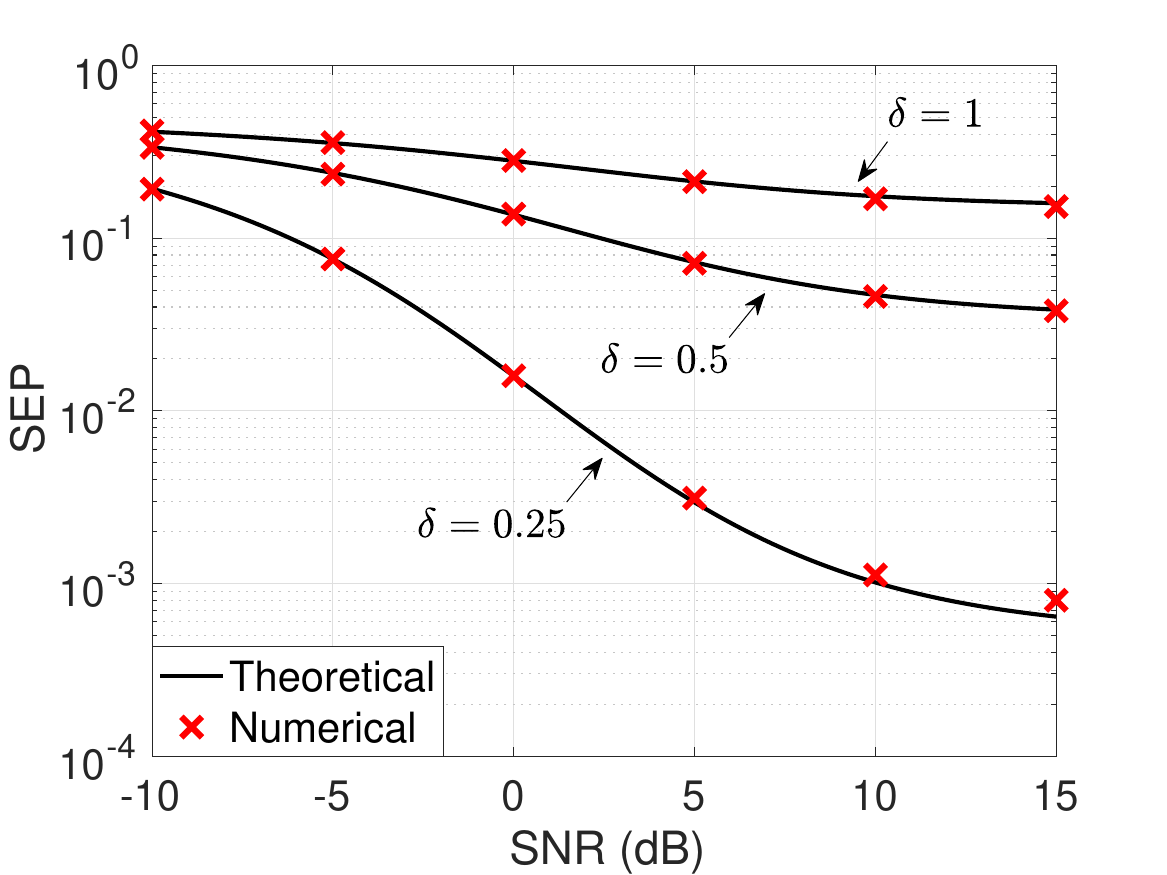}\label{SEPb}
\centering
\vspace{-0.1cm}
\caption{Theoretical and numerical SEP versus the SNR for different systems, where $\rho=\lambda=0.2$.  The number of transmit antennas is fixed as $N=128$, and the number of users is $K=\delta N$.} 
\label{fig_SEP}
\end{figure}
In this section, we present simulation results to validate the asymptotic analysis.  In Fig. \ref{fig_SEP}, we plot both the numerical and theoretical SEP (given by Corollary \ref{SEP}) versus the SNR for different systems.   As shown, the numerical and theoretical SEP match closely. 

\begin{figure}
\subfigure[SEP versus $\lambda$, where $\rho=0$.]{\includegraphics[width=0.245\textwidth]{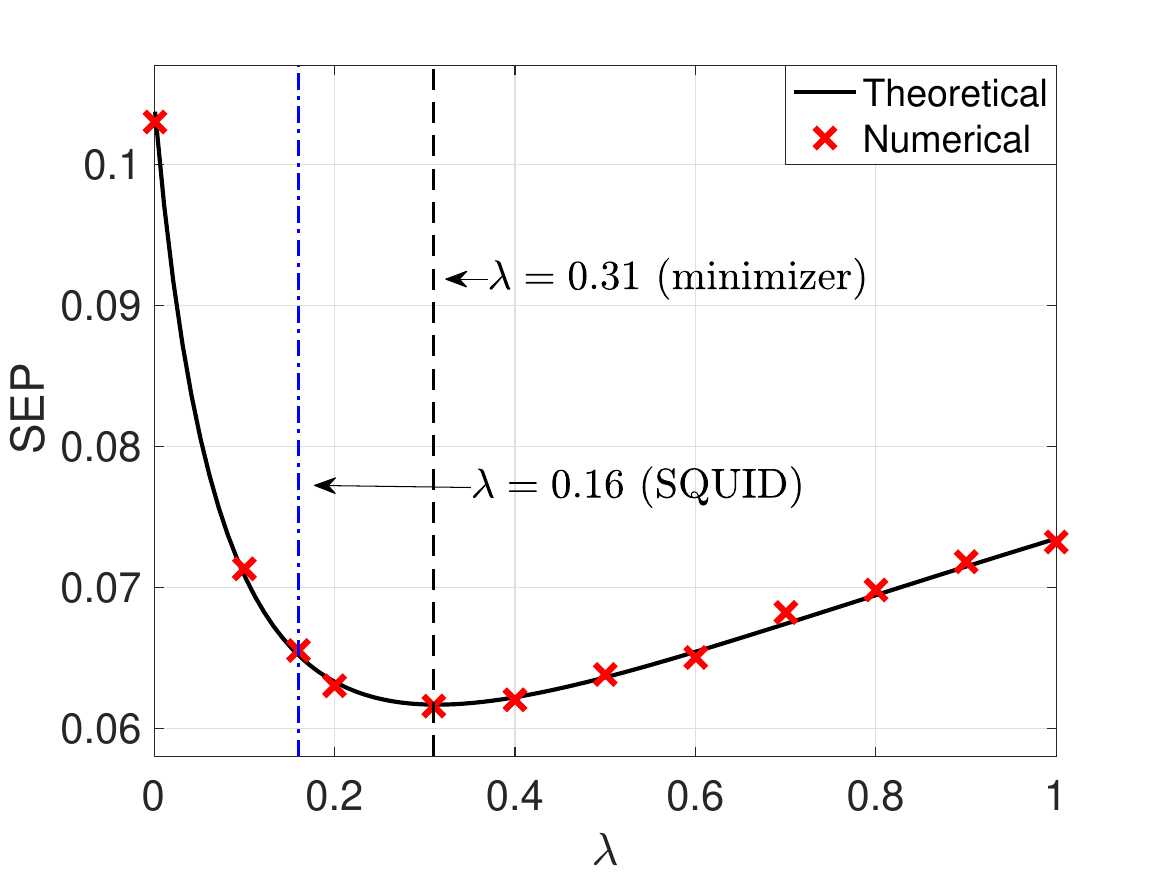}\label{lambda}}\hspace{-0.3cm}
\subfigure[SEP versus $\rho$, where $\lambda=0.31$.]{\includegraphics[width=0.245\textwidth]{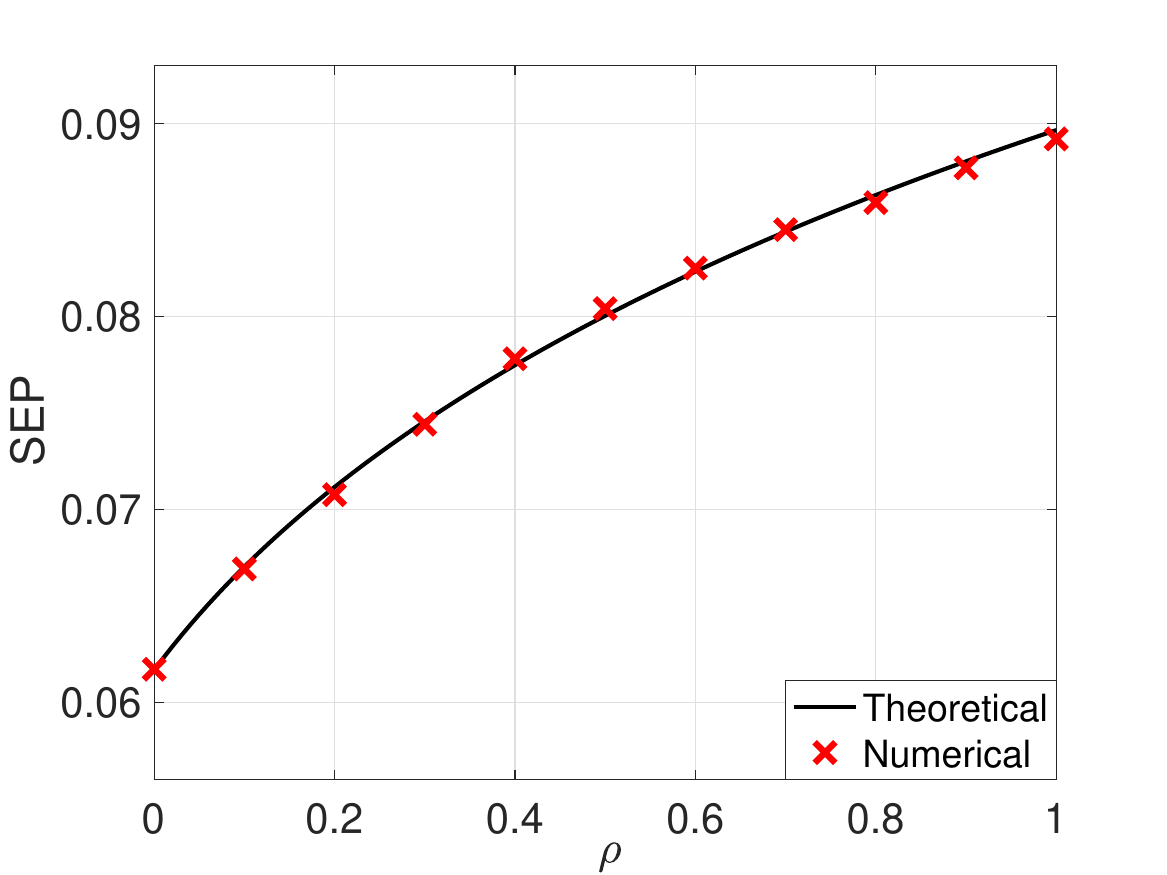}\label{rho}}
\centering
\vspace{-0.1cm}
\caption{Theoretical and numerical SEP versus the regularization parameters $(\rho,\lambda)$, where $\delta=0.5$, $N=128$, and SNR$=5$\,dB.}
\label{fig_parameter}
\end{figure}

Fig. \ref{fig_parameter} further shows the theoretical and numerical SEP versus the regularization parameters $(\rho,\lambda)$. To compare with the classical SQUID precoder \cite{SQUID}, we set $\rho=0$ in Fig. \ref{lambda}. It can be observed from Fig. \ref{lambda} that the asymptotic result provides an accurate prediction when $\rho=0$, although our analysis is rigorous only for positive $\rho$. Moreover, by tuning the regularization parameters, we can achieve additional performance gain over SQUID; for example, setting $\lambda=0.31$ yields better performance than SQUID in the considered system.  In Fig. \ref{rho}, we fix $\lambda=0.31$ and vary $\rho$. As shown, the SEP is minimized at $\rho=0$, i.e., when no $\ell_2$ regularization term is included in the relaxation model \eqref{eq:MSErelax}. 


Interestingly, by searching over the two-dimensional $(\rho,\lambda)$ space for various system configurations using our analytical result, we observe that the theoretical SEP is always minimized at $\rho=0$ (the results are omitted for space limitations). A rigorous characterization of this observation, or more generally, the joint optimization of the regularization parameters  $(\rho,\lambda)$ based on our analytical results, is left for future work.
\bibliographystyle{IEEEtran}
\bibliography{reference_dce}

\end{document}